\author{Robert Kesler and Benjamin Steinhurst}
\title{The Casimir effect on Laakso spaces}
\newcommand{\B}[1]{\ensuremath{\mathbb{#1}}}
\newtheorem{theorem}{Theorem}[section]
\newtheorem{cor}{Corollary}[section]
\newtheorem{prop}{Proposition}[section]
\newtheorem{definition}{Definition}[section]
\numberwithin{equation}{section}
\date{9 July 2012}
\begin{document}

\begin{abstract}
We explore the properties of an analog to the Casimir effect on Laakso spaces such as the dependence on the separation of the plates and boundary effects. We also mention some results on the influence of complex poles in the spectral zeta function over finite approximations to Laakso spaces. 

\emph{MSC 2010:} 81Q35, 28A80.

\emph{Keywords:} Casimir effect, spectral zeta function, zeta regularization, Laakso spaces.

\end{abstract}

\maketitle

\begin{center}{\small
Contact:\\
rkesler@math.cornell.edu\\
Department of Mathematics\\
Cornell University\\
Ithaca~NY~14850~USA\\
\ \\
steinhurst@math.cornell.edu\\
Department of Mathematics\\
Cornell University\\
Ithaca~NY~14850~USA}\end{center}

\section{Introduction}


With recent advances in fractal analysis there has become available a large amount of information concerning Laplacians and their spectra over fractal spaces \cite{Kig01,Strichartz2006,BegueEtAl2012}. There has been progress in using this information to construct and analyze analogs to physical systems, e.g. the behavior of a photon in a fractal \cite{ADT2010} and other ``physical'' consequences \cite{ADT2009,Dunne2012,FKS09,S09}. The physical consequence of fractal geometry that we explore in this paper is the Casimir effect \cite{Spruch1986,MartinBuenzli2006}.

In these works, the underlying space is typically a finitely ramified fractal with a symmetry condition; however, the spectrum of the Laplacian for these objects is generally not known exactly or only described as a scaled Julia set, which means the growth estimates for the eigenvalue counting functions must be used instead. Laakso spaces, whose exact spectrum the authors previously computed, enable us to avoid this complication. In \cite{KKPSS} the authors also computed the exact eigenfunctions of the Hamiltonian with a square well potential, the spectral zeta function for certain defining sequences $\{ j_n\}$, and a Casimir effect on a 1 dimensional arrangement. This paper continues the analysis of the Casimir effect on Laakso spaces.


We begin with defining Laakso spaces in a convenient manner in Section \ref{sec:laakso} where we will also give an explicit description of the spectrum of a natural Laplacian on Laakso spaces. Following this is Section \ref{sec:zeta} where we discuss the general properties and calculations for  spectral zeta functions over Laakso spaces. In Section \ref{sec:finite} we observe that the complex dimensions appear in a model which has only finite complexity and is in principle constructible as a physical object. In Section \ref{sec:CasimirL} we revisit the authors' earlier work in \cite{KKPSS} and determine the strength of the Casimir effect in a Laakso space as a function of both the defining sequence $\{j_i\}$ and the distance between the plates. 
Lasly, in Section \ref{sec:CasimirL} we construct a $3+$ dimensional arrangement involving a Laakso space and show that the Casimir pressure  is proportional to the inverse fourth power of the separation distance. This power has the same exponent as the classical Casimir effect between two parallel uncharged conducting plates; however, what makes this result unusual is that this exponent is not equal to the spectral dimension, $d_s$, plus an integer as commonly seen on fractal domains. This reflects the fact that Laplacians on Laakso spaces are truly $1-$dimensional rather than $d_s-$dimensional operators. The rapid growth in the eigenvalue counting function is due more to the geometry of Laakso spaces and their graph approximations than the nature of the Laplacian.

{\bf Acknowledgements:} We thank Christopher Kauffman, Amanda Parshall, and Evelyn Stamey for their work on the foundational counting arguments that are so often used in this paper. Also Erik Akkermans and Alexander Teplyaev for their frequent and useful comments and challenges. 

\section{Laakso spaces}\label{sec:laakso}

These spaces were introduced in \cite{Laakso2000} and the spectral theory on them developed in \cite{RS2009,KKPSS,SteinhurstTeplyaevA}. We will use the construction indicated in \cite{BE04} and spelled out in detail in \cite{SteinhurstTeplyaevA}. Let $\{j_i\}$ a sequence of integers such that
\begin{equation}
	\lim_{n \rightarrow \infty} \left( \prod_{i=1}^{n} j_i \right)^{1/n} = r.
\end{equation}
The following construction can proceed without this restriction for any integer sequence but the limits that will be taken may not exist otherwise. Define
\begin{equation}
	d_n =  \prod_{i=1}^{n} j_i \hspace{.5in} L_n = \left\{ \frac{m}{d_n} :\ m = 1,\ldots , d_n-1 \right\}.
\end{equation}
Where $L_n$ will be the locations of ``wormholes'' of lever $n$ or lower and $L_n \setminus L_{n-1}$ the locations of the new ``wormholes'' at level $n$. We will write $I = [0,1]$ and $K$ for a Cantor set.

Set $F_0 = I$, $G = \{0,1\}$, and $B_n = L_n \setminus L_{n-1} \subset I$. Let $\phi_{1,0}:F_0 \times G \rightarrow F_0$ be the projection onto the interval $F_0$. Define $F_1 = F_0 \times G / \phi_{1,0}^{-1}(B_1)$. Inductively construct $\phi_{n,n-1}$ and $F_n$. Notice that there are also naturally defined projections $\phi_{n,m}:F_n \rightarrow F_m$. Let $\mu_n$ be the probability measure on $F_n$ that is inherited from Lebesgue measure on $F_0$.

\begin{prop}
The system $(F_n,\phi_{n,n-1},\mu_n)$ is a projective system of measure spaces.
\end{prop}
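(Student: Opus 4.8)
The plan is to verify the two defining properties of a projective system of measure spaces. First, that the connecting maps $\phi_{n,m} := \phi_{m+1,m} \circ \cdots \circ \phi_{n,n-1}$ (for $n \ge m$) are measurable and satisfy the transitivity relation $\phi_{n,m} \circ \phi_{k,n} = \phi_{k,m}$ for $k \ge n \ge m$, together with $\phi_{n,n} = \mathrm{id}$; and second, that each $\phi_{n,m}$ is measure-preserving, i.e. $(\phi_{n,m})_{\ast} \mu_n = \mu_m$. Since the maps are defined as compositions of the one-step projections $\phi_{n,n-1}$, the transitivity relation holds automatically by associativity of composition, and it suffices to establish measurability and the measure-preserving property at the level of a single step $\phi_{n,n-1} \colon F_n \to F_{n-1}$; the general case then follows by composing and pushing forward iteratively.

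First I would pin down the measurable structures and check that $\phi_{n,n-1}$ is well defined on the quotient. Write $p \colon F_{n-1} \times G \to F_{n-1}$ for the projection and $q \colon F_{n-1} \times G \to F_n$ for the quotient map gluing the two sheets along the level-$n$ wormhole fibers. The key observation is that the gluing identifies only points sharing the same $F_{n-1}$-coordinate, so $p$ is constant on each equivalence class and therefore descends to a map $\phi_{n,n-1} \colon F_n \to F_{n-1}$ with $\phi_{n,n-1} \circ q = p$. Equipping each $F_n$ with the quotient (Borel) $\sigma$-algebra makes $q$ measurable, and hence $\phi_{n,n-1}$ measurable as well, since for measurable $A \subseteq F_{n-1}$ the set $q^{-1}\bigl(\phi_{n,n-1}^{-1}(A)\bigr) = p^{-1}(A)$ is measurable, so $\phi_{n,n-1}^{-1}(A)$ lies in the quotient $\sigma$-algebra.

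Next I would compute the measures. Let $\nu$ be the uniform probability measure on $G = \{0,1\}$ and interpret the phrase ``inherited from Lebesgue measure'' as the inductive definition $\mu_n := q_{\ast}(\mu_{n-1} \otimes \nu)$, with $\mu_0$ Lebesgue measure on $I = [0,1]$; each $\mu_n$ is then automatically a probability measure. The measure-preserving property for one step is the short computation
\[
(\phi_{n,n-1})_{\ast} \mu_n = (\phi_{n,n-1})_{\ast} q_{\ast} (\mu_{n-1}\otimes \nu) = p_{\ast}(\mu_{n-1}\otimes\nu) = \mu_{n-1},
\]
where the last equality is the standard fact that the marginal of a product measure under the coordinate projection recovers the factor, using $\nu(G)=1$. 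Iterating yields $(\phi_{n,m})_{\ast} \mu_n = \mu_m$ for all $n \ge m$.

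The only genuinely substantive point, and the step I would treat most carefully, is confirming that this inductive $\mu_n$ really is the measure ``inherited from Lebesgue measure on $F_0$'' and is insensitive to the identifications made by $q$. This hinges on the wormhole set being $\mu_{n-1}\otimes\nu$-null: the level-$n$ wormholes lie over the finite set $B_n \subset I$, so their fibers in $F_{n-1}\times G$ form a finite set and hence carry measure zero. Consequently $q$ is injective off a null set, the pushforward $\mu_n$ charges no ``seam,'' and the descent of both the projection and the measure is unambiguous. Once this null-set bookkeeping is in place, the remaining verifications are the routine functorial manipulations above.
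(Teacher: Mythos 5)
Your proof is correct, and it fills in a verification that the paper itself omits entirely: the proposition is stated without proof, treated as immediate from the construction (with the details of the construction deferred to \cite{SteinhurstTeplyaevA} and the general theory of projective systems to \cite{HY88}). Your argument --- reducing to the one-step map $\phi_{n,n-1}$, checking it descends to the quotient because the gluing only identifies points with the same $F_{n-1}$-coordinate, and then pushing forward $\mu_{n-1}\otimes\nu$ through $\phi_{n,n-1}\circ q = p$ --- is exactly the routine check the paper suppresses, and your observation that the wormhole fibers are finite (hence null, since $\mu_{n-1}$ is non-atomic) correctly handles the only point where something could go wrong, namely that the quotient does not disturb the measure or the descent of the projection.
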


It is important to note that $F_n$ is a metric graph as it is a collection of line segments of all of length $d_n^{-1}$ at nodes whose locations have a coordinate in $I$ taken from $L_n$. It is this particularly regular structure for the $F_n$ that we will use in the absence of any strict geometric self-similarity. 

\begin{definition}
The projective limit of $F_n$ is a Laakso space with data $\{j_i\}$. This is written as $\lim_{\leftarrow} F_n = L$. There are also associated projections $\Phi_n:L \rightarrow F_n$ such that $\phi_{n,m} \circ \Phi_n  = \Phi_m$ for all $m  \leq  n$.
\end{definition}

For more on projective limits of measure spaces see \cite{HY88}.

\begin{theorem}
For any choice of $\{j_i\}$ such that $r$ exists the corresponding Laakso space $L$ is a complete geodesic metric measure space with Hausdorff dimension $1+\frac{log(r)}{log(2)}$.
\end{theorem}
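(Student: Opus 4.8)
The plan is to establish the three assertions --- completeness, the geodesic property, and the Hausdorff dimension --- one at a time, all on the metric inherited from the approximating graphs. First I equip each $F_n$ with its intrinsic path metric $d_{F_n}$, so that $F_n$ is a compact geodesic metric graph whose edges have length $d_n^{-1}$. Because every bonding map $\phi_{n,m}$ is a surjective, measure-preserving, $1$-Lipschitz quotient map, the numbers $d_{F_n}(\Phi_n x,\Phi_n y)$ increase with $n$ for fixed $x,y\in L$, and I define $d_L(x,y)=\lim_{n\to\infty}d_{F_n}(\Phi_n x,\Phi_n y)$. The preliminary step is to check that this limit is finite and induces the inverse-limit topology: finiteness follows from a uniform bound on $\operatorname{diam}(F_n)$, since altering the $i$-th fibre coordinate costs at most a detour $2/d_{i-1}$ to the nearest wormhole of $B_i$ and $\sum_i d_{i-1}^{-1}<\infty$ whenever $r>1$. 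Completeness is then soft: $L=\varprojlim F_n$ is an inverse limit of compact Hausdorff spaces, hence compact, and a compact metric space is complete.

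For the geodesic property I would produce connecting curves of minimal length directly from the ``interval times Cantor set with wormholes'' picture. Given $x=(s,\omega)$ and $y=(t,\omega')$ with $s\le t$, I sweep the interval coordinate monotonically from $s$ to $t$; each time the sweep crosses a wormhole of $B_i$ the $i$-th fibre coordinate may be reset at no cost, so every coordinate for which some point of $B_i$ lies in $(s,t)$ can be matched to $\omega'_i$ for free, while the finitely many remaining coordinates are matched by short detours to the nearest available wormhole. Verifying that the resulting path has length exactly $d_L(x,y)$ is the content here; once this is done, $L$ is a complete length space, and since it is compact --- hence locally compact and proper --- the Hopf--Rinow--Cohn-Vossen theorem confirms that every pair of points is joined by a geodesic.

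The Hausdorff dimension I would obtain by proving that $\mu$ is Ahlfors regular of the appropriate exponent and then applying the mass distribution principle. Disintegrating $\mu$ as Lebesgue measure on the interval factor against the uniform measure $\nu$ on the fibre Cantor set $\{0,1\}^{\B N}$, I estimate $\mu(B(x,\rho))$ by trapping the ball between two product ``boxes'', each a horizontal interval of length $\asymp\rho$ times a fibre cylinder. A fibre coordinate $i$ can be switched inside the ball exactly when some wormhole of $B_i$ lies within horizontal distance $\asymp\rho$, which happens as soon as the wormhole spacing $d_i^{-1}$ is at most a fixed multiple of $\rho$; hence the number of coordinates frozen by the radius $\rho$ is comparable to $\log(\rho^{-1})/\log r$, and the free cylinder has $\nu$-measure $\asymp 2^{-\log(\rho^{-1})/\log r}$. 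Evaluating at $\rho=d_n^{-1}$ gives $\mu(B(x,d_n^{-1}))\asymp 2^{-n}d_n^{-1}$; comparing this with $\rho^{Q}$ forces the scaling relation $d_n^{\,Q-1}\asymp 2^{n}$, and since $(\prod_{i\le n}j_i)^{1/n}\to r$ the exponent $Q$, and with it the Hausdorff dimension, is read off in the limit.

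I expect the Ahlfors-regularity estimate to be the main obstacle. The difficulty is uniformity: the distance from a point to the nearest wormhole of $B_i$ depends on the horizontal location within a level-$(i-1)$ cell, so a metric ball is not exactly a product box, and one must show it is trapped between product boxes of comparable size for every centre $x$ and every radius $\rho$, including centres near wormholes where several fibre coordinates merge simultaneously. This is precisely the wormhole bookkeeping that also underlies the length computation in the geodesic step, so I would develop it once, in the form of sharp two-sided estimates for $d_{F_n}$ between arbitrary points, and then reuse it for both the geodesic and the dimension claims.
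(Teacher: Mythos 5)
Your outline is sound, but it should be said up front that the paper itself does not prove this theorem: its ``proof'' consists of citing Lemma 4.6.1 of the Steinhurst thesis for the fact that the projective-limit construction yields a Laakso space, and Laakso's original paper for the metric-measure properties. What you propose---the monotone limit metric $d_L(x,y)=\lim_n d_{F_n}(\Phi_n x,\Phi_n y)$, compactness of the inverse limit for completeness, a length-space argument plus Hopf--Rinow for the geodesic property, and Ahlfors regularity via two-sided ball estimates for the dimension---is essentially a reconstruction of the argument in those references, and the technical core you correctly isolate (uniform trapping of metric balls between product boxes, i.e.\ the wormhole bookkeeping) is exactly what the cited work supplies. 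So as a plan it is the right one.

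There is, however, one point you must not leave at the level of ``read off in the limit.'' Your scaling relation $d_n^{\,Q-1}\asymp 2^n$ gives
\begin{equation*}
Q-1=\lim_{n\to\infty}\frac{n\log 2}{\log d_n}=\frac{\log 2}{\log r},
\qquad\text{i.e.}\qquad \dim_H(L)=1+\frac{\log 2}{\log r},
\end{equation*}
which is the \emph{reciprocal} ratio from the one printed in the theorem, $1+\frac{\log r}{\log 2}$; the two agree only when $r=2$. Your exponent is the correct one: $F_n$ consists of roughly $2^n d_n$ edges of length $d_n^{-1}$, so the box-counting dimension of $L$ is at most $\lim_n \log(2^n d_n)/\log(d_n)=1+\log 2/\log r$, which rules out the printed formula whenever $r>2$; moreover the paper itself asserts in Section 6 that the Hausdorff dimension of $L_j$ tends to $1$ as $j\to\infty$, which is consistent with $1+\log 2/\log r$ and contradicts $1+\log r/\log 2$. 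In other words, completing your argument will prove a statement that differs from the theorem as printed, because the theorem's formula has the ratio inverted (a typo that propagates to the $d_s=d_h$ claim in Section 5); you should state the exponent explicitly and flag the discrepancy rather than eliding it. Two smaller caveats: your convergence claim $\sum_i d_{i-1}^{-1}<\infty$ and the connectivity needed for the geodesic property require every $j_i\ge 2$ (if some $j_i=1$ then $B_i=\emptyset$, $F_i$ acquires a disconnected doubling, and $L$ cannot be geodesic), a hypothesis the paper leaves implicit; and in the geodesic step what your sweep construction yields directly is that $d_L$ is a length metric, with the existence of an actual minimizer then coming from compactness via Hopf--Rinow, so phrase the ``length exactly $d_L(x,y)$'' claim as an infimum statement.
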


That the presented construction gives a Laakso space is Lemma 4.6.1 in \cite{Steinhurst2010}. The properties of the Laakso spaces are proved in \cite{Laakso2000}.

Let $\Delta_n$ be the self-adjoint Laplacian on $F_n$ acting as $-\frac{d^2}{dx^2_e}$ where $x_e$ is a coordinate on each line segment. Then the domain of $\Delta_n$ is taken to be the closure of all continuous functions on $F_n$  that are twice differentiable when restricted to each interval and satisfy Kirchoff matching conditions at all vertices. This forces Neumann boundary conditions at the degree one vertices that form the boundary of $F_n$. 

Since $\phi_{n,m}$ maps $F_n$ onto $F_m$ we can by composition use $\phi_{n,m}$ to map functions over $F_m$ to functions over $F_n$ by the convention $\phi_{n,m}^{*} f= f \circ \phi_{n,m}$. The same definition is used for $\Phi_n^{*}$ as well. 

\begin{prop}
For $m < n$, $\phi_{n,m}^{*}Dom(\Delta_m) \subset Dom(\Delta_n)$.
\end{prop}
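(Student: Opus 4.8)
The plan is to verify directly that $\phi_{n,m}^{*}f = f \circ \phi_{n,m}$ lies in the defining core of $\Delta_n$ whenever $f$ lies in the core of $\Delta_m$, and then to pass to the closure using that $\phi_{n,m}^{*}$ is an $L^2$-isometry intertwining the two Laplacians. First I would record the local geometry of the projection: because $L_m \subset L_n$, every vertex of $F_m$ is covered by a vertex of $F_n$, so no edge of $F_n$ straddles a vertex of $F_m$; consequently $\phi_{n,m}$ restricts on each edge of $F_n$ to an arc-length isometry onto a subsegment of a single edge of $F_m$, the $I$-coordinate being preserved. Continuity of $\phi_{n,m}^{*}f$ is then immediate from continuity of $\phi_{n,m}$ and of $f$, and twice-differentiability of $\phi_{n,m}^{*}f$ on each edge of $F_n$ follows because it is the restriction of a function twice differentiable along an edge of $F_m$; moreover the arc-length second derivative is preserved, i.e. $-\frac{d^2}{dx_e^2}(\phi_{n,m}^{*}f) = \phi_{n,m}^{*}\left(-\frac{d^2}{dx_e^2}f\right)$ on edge interiors.

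The substantive step is checking the vertex conditions for $\phi_{n,m}^{*}f$, and I expect this to be the main obstacle, since it demands a careful accounting of the sheet-and-wormhole structure. I would split the vertices of $F_n$ into three classes. At a boundary (degree-one) vertex of $F_n$, which sits over an endpoint of $I$ where no gluing occurs, the incident edge maps isometrically onto an initial segment of a boundary edge of $F_m$, so the Neumann condition $f' = 0$ pulls back to the Neumann condition for $\phi_{n,m}^{*}f$. At a vertex of $F_n$ lying over a vertex of $F_m$ (an ``old'' wormhole in $L_m$), the incident edges of $F_n$ are in derivative-preserving bijection with the incident edges of $F_m$, so the Kirchhoff sum for $\phi_{n,m}^{*}f$ equals that for $f$, which vanishes. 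At a ``new'' wormhole in $L_n \setminus L_m$, the key observation is that this vertex lies in the interior of an edge of $F_m$, where $f$ is smooth; the edges of $F_n$ meeting there arrive as matched left/right pairs within each sheet, so within each sheet the two outgoing derivatives are $-f'$ and $+f'$ and cancel, forcing the total Kirchhoff sum to vanish.

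Having shown that $\phi_{n,m}^{*}f$ satisfies continuity, edge-smoothness, the Kirchhoff conditions, and the Neumann conditions, it lies in the core of $\Delta_n$, and the computation above yields the intertwining relation $\Delta_n \phi_{n,m}^{*}f = \phi_{n,m}^{*}\Delta_m f$ on the core. Finally I would invoke that $(F_n,\phi_{n,n-1},\mu_n)$ being a projective system of measure spaces gives $(\phi_{n,m})_{*}\mu_n = \mu_m$, so that $\phi_{n,m}^{*}$ is an isometry from $L^2(F_m,\mu_m)$ into $L^2(F_n,\mu_n)$. Because it is an $L^2$-isometry and intertwines the Laplacians, it is isometric for the graph norm $\|f\|^2 + \|\Delta f\|^2$; since $Dom(\Delta_n)$ is by definition the graph-norm closure of the core and $\phi_{n,m}^{*}$ maps the core of $\Delta_m$ into the core of $\Delta_n$, it maps the closure $Dom(\Delta_m)$ into $Dom(\Delta_n)$, which is the claim.
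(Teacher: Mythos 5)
Your proof is correct. The paper itself states this proposition with no proof at all (it is treated as routine, with the underlying verification implicit in \cite{RS2009}), so there is no textual argument to compare against; what you have written is exactly the verification that the paper suppresses. The substantive content is your three-way case analysis at the vertices of $F_n$: at vertices lying over $L_m$-wormholes the incident edge germs are in derivative-preserving bijection with those in $F_m$ (no gluing at levels greater than $m$ occurs at those $I$-coordinates, so $\phi_{n,m}$ is a local isomorphism of graphs there); at new wormholes in $L_n \setminus L_m$ the outgoing derivatives of the pullback cancel in left/right pairs because the pullback is locally $f$ composed with an $I$-coordinate-preserving isometry into the interior of an $F_m$-edge; and at degree-one vertices Neumann pulls back to Neumann. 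Your closure step is also handled correctly, and it is a point that a sloppier argument would miss: since the paper defines $Dom(\Delta_n)$ as a closure of the smooth core, checking the core alone is not enough, and you supply the missing ingredient, namely that $(\phi_{n,m})_*\mu_n=\mu_m$ makes $\phi_{n,m}^*$ an $L^2$-isometry which, combined with the edgewise intertwining, is a graph-norm isometry, so closedness of $\Delta_n$ carries the containment from the core to its closure. As a byproduct your argument gives the relation $\Delta_n\phi_{n,m}^*f=\phi_{n,m}^*\Delta_m f$ on all of $Dom(\Delta_m)$, which is precisely what the theorem quoted from \cite{RS2009} immediately after this proposition requires.
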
 

\begin{theorem}[\cite{RS2009}]
There exists a self-adjoint Laplacian $\Delta$ on $L$ such that $\Delta\Phi_n^{*}f = \Phi_n^{*}\Delta_n f$ for all $f \in Dom(\Delta_n)$ and for all $n$ with domain 
\begin{equation}
	Dom(\Delta) = \overline{\bigcup_{n=0}^{\infty} \Phi_n^{*}Dom(\Delta_n)}.
\end{equation}
Furthermore
\begin{eqnarray}
	\sigma(\Delta) &=& \bigcup_{k=0}^{\infty} \{\pi^2 k^2\}\cup\bigcup_{n=1}^\infty \bigcup_{k=0}^{\infty} \{(k+1/2)^2\pi^2 d_n^2\}\cup\bigcup_{n=1}^\infty \bigcup_{k=1}^\infty \{k^2\pi^2d_n^2\}\notag\\ &&\cup\bigcup_{n=2}^\infty \bigcup_{k=1}^\infty \{k^2\pi^2d_n^2\} \cup \bigcup_{n=2}^\infty \bigcup_{k=1}^{\infty} \left\{\frac{k^2\pi^2 d_n^2}{4}\right\}
\end{eqnarray}
And multiplicities
\begin{equation}
1,\hspace{.5cm} 2^n,\hspace{.5cm} 2^{n-1}(j_n-2)d_{n-1},\hspace{.5cm} 2^{n-1}(d_{n-1}-1),\hspace{.5cm} 2^{n-2}(d_{n-1}-1)
\end{equation}\label{eqn:multi}
\end{theorem}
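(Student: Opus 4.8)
The plan is to construct $\Delta$ as an inductive limit of the $\Delta_n$ and then read its spectrum directly off the finite graphs $F_n$. For the existence and self-adjointness I would work with the Dirichlet forms $\mathcal{E}_n(f)=\int_{F_n}|f'|^2\,d\mu_n$. Since the measures are consistent under the projections and $\Phi_m^{*}=\Phi_n^{*}\circ\phi_{n,m}^{*}$, each $\Phi_n^{*}$ is an isometry of $L^2(F_n,\mu_n)$ into $L^2(L,\mu)$ with nested images, so $L^2(L,\mu)=\overline{\bigcup_n\Phi_n^{*}L^2(F_n,\mu_n)}$. Transporting the forms $\mathcal{E}_n$ to $L$ gives a consistent family whose closure is a closed form $\mathcal{E}$, and its associated operator is the desired $\Delta$. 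The intertwining $\Delta\Phi_n^{*}f=\Phi_n^{*}\Delta_n f$ and the stated domain then follow from the commutation $\phi_{n,m}^{*}\Delta_m=\Delta_n\phi_{n,m}^{*}$, which is local and reduces to the observation that pulling a function back along the folding map neither creates nor destroys Kirchhoff data at a vertex.

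Granting this, the first reduction is that $\Phi_n^{*}$ sends eigenfunctions of $\Delta_n$ to eigenfunctions of $\Delta$ with the same eigenvalue, and that $\bigcup_n\Phi_n^{*}\mathrm{Dom}(\Delta_n)$ is a core; hence $\sigma(\Delta)=\overline{\bigcup_n\sigma(\Delta_n)}$. Because the relevant eigenvalues $k^2\pi^2 d_n^2$ tend to infinity with $n$, each point of the union is isolated with finite multiplicity and nothing is lost or gained in the limit, so the whole statement is reduced to computing $\sigma(\Delta_n)$ with multiplicities on the metric graph $F_n$.

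For that I would exploit the order-$2$ symmetry from the last factor $G$ in $F_n=(F_{n-1}\times G)/\sim$. Splitting $L^2(F_n)$ into the $\pm1$ eigenspaces of the swap, the symmetric part is canonically isomorphic (via $\phi_{n,n-1}^{*}$) to the entire problem on $F_{n-1}$, giving the recursion that accounts for the lower-scale eigenvalues and, at the bottom, the Neumann interval modes $\pi^2 k^2$ of multiplicity $1$. The antisymmetric part is exactly the functions vanishing on the preimage of $B_n$, i.e.\ the $F_{n-1}$-Laplacian with Dirichlet conditions added at every point lying over $B_n$. These points cut each edge of $F_{n-1}$ (length $d_{n-1}^{-1}$) into $j_n$ subsegments of length $d_n^{-1}$ and decouple the problem into three model pieces: Neumann--Dirichlet segments at the $2^n$ boundary vertices, giving $(k+\tfrac12)^2\pi^2 d_n^2$; the $j_n-2$ genuinely interior Dirichlet--Dirichlet segments on each of the $2^{n-1}d_{n-1}$ edges, giving $k^2\pi^2 d_n^2$; and the star graphs formed by the subsegments meeting at the $d_{n-1}-1$ interior wormholes of each sheet, whose modes split into the remaining $k^2\pi^2 d_n^2$ family and the $\tfrac14 k^2\pi^2 d_n^2$ family coming from configurations of effective length $2d_n^{-1}$. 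Reading the eigenvalues off each interval or star is elementary; the multiplicities come from counting segments and vertices of each type and multiplying by the number $2^{n-1}$ or $2^{n-2}$ of sheets surviving the symmetrization.

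The main obstacle is precisely this bookkeeping: organizing the $n$-fold iterated $\B{Z}_2$ decomposition so that every eigenfunction is counted once and only once, correctly separating the two families at eigenvalue $k^2\pi^2 d_n^2$ from the $\tfrac14 k^2\pi^2 d_n^2$ family produced by the star graphs, and keeping the sheet factors $2^n$, $2^{n-1}$, $2^{n-2}$ straight as the construction doubles at each level. I would control this by fixing, for each $n$, an explicit basis of the antisymmetric eigenspaces adapted to the connected components that remain after imposing the Dirichlet conditions, checking the resulting count against the total edge number $2^n d_n$ of $F_n$ as a consistency constraint, and only then letting $n\to\infty$.
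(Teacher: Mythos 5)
Your proposal is correct and follows essentially the same route as the paper (and the cited work \cite{RS2009}): the operator on $L$ is obtained from the projective/inductive limit structure, and the spectrum is computed level by level by localizing the ``new'' eigenfunctions of $\Delta_n$ on subgraphs (Neumann--Dirichlet boundary segments, interior Dirichlet--Dirichlet segments, and degree-four ``crosses'') and counting those subgraphs, which is exactly the paper's decomposition into V's, loops, and crosses. Your $\B{Z}_2$ symmetric/antisymmetric splitting is just the standard way of making that localization rigorous --- the antisymmetric part being the Dirichlet-cut problem on one sheet --- and your reorganization of the cross spectrum into the $k^2\pi^2d_n^2$ (multiplicity $2$ per cross) and $\tfrac{1}{4}k^2\pi^2d_n^2$ (multiplicity $1$ per cross) families reproduces the theorem's multiplicity table.
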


The method of calculating the spectrum is based on the fact that an eigenfunction of $\Delta_n$ can be localized between wormholes since interior wormholes are degree four vertices and it is possible for the function to be constant on two of the incoming edges and non-constant on the other two and still satisfy the Kirchoff matching conditions that all incoming first derivatives sum to zero. The spectrum is then determining by breaking down the graphs, $F_n$ into subgraphs on which such eigenfunctions are supported and the multiplicities counted by counting the number of each of these subgraphs. These counting arguments will be revisited later in this paper.

\section{Spectral Zeta Functions}\label{sec:zeta}

\begin{definition}
Let $\Delta$ be a self-adjoint positive-definite operator with a discrete spectrum $\lambda_i$ and multiplicities $g_i$. Then the spectral zeta function is defined, where convergent as
\begin{equation}
	\zeta_{\Delta}(s) = \sum_{i=1}^{\infty} \frac{g_i}{\lambda_i^{s}}.
\end{equation}
We will also denote the analytic continuation of this function as $\zeta_{\Delta}(s)$.
\end{definition}

If one considers the Laplacian on $[0,1]$ with Dirichlet boundary conditions the spectrum is $\{k^{2}\pi^{2}\}_{k=1}^{\infty}$ and all of multiplicity one. Then
\begin{equation}
	\zeta(s) = \sum_{k=1}^{\infty} \frac{1}{(k\pi)^{2s}} = \frac{1}{\pi^{2s}}\zeta_R(2s)
\end{equation}
Where $\zeta_R(s)$ is the Riemann zeta function which is know to have a meromorphic continuation to the whole complex plane. In the case of a Laakso space with data $j_i = 2$
\begin{eqnarray}
	\zeta_L(s)= &&\frac{\zeta_R(2s)}{\pi^{2s}}\left( 
\frac{4(2^{2s-1}+1)}{4^{s}(4^{2}-4)}+\frac{6(2^{2s-1}-1)}{4^{s}(4^{s}-2)} + \frac{2^{s+1}-2 + 2^{2s}}{4^{s}}
 \right)
\end{eqnarray}
Which then also has a meromorphic continuation to the complex plane with poles at known locations. In \cite{KKPSS} formulae for the spectral zeta functions of any Laakso space with periodic data $\{j_i\}$ are given as a rational complex valued function times the Riemann zeta function. A feature of the spectral zeta functions on Laakso spaces that does not appear in the interval case is the existence of poles for $\zeta_L(s)$ off of the real axis. These are referred to as complex dimensions \cite{LvF}. In \cite{BegueEtAl2012} the residues of $\zeta_L(s)$ are used to calculate the leading terms of the Weyl asymptotics for $\Delta$ which for Laakso spaces with periodic $j_i$ have a log-periodic oscillating term of leading order.

\begin{theorem}
Suppose that $F_{\infty}$ is constructed in such a way that $F_0 \subset \B{R}^{d}$ is compact with non-empty interior and $\phi_0(B_i)$ induces a self-similar cell structure on $F_0$. Further assume that the $B_i$ have empty interior. Then the associated spectral zeta function will have a tower of simple poles above the spectral dimension. Furthermore, if the spectral zeta function over $F_0$ is meromorphic on the entire complex plane so is the spectral zeta function over $F_{\infty}$.
\end{theorem}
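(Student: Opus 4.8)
The plan is to factor $\zeta_{F_\infty}(s)$ as the spectral zeta function of the base $F_0$ times a meromorphic ``tower factor'' arising from the self-similar cell structure, and then to locate the poles from that factor. First I would extend the eigenfunction-localization argument that produced the explicit spectrum in the $F_0=I$ case. The self-similar cell structure induced by $\phi_0(B_i)$ partitions $F_0$, at each level $n$, into cells that are affine copies of $F_0$ contracted by the linear factor $d_n^{-1}$; the assumption that each $B_i$ has empty interior is exactly what allows an eigenfunction to be locally constant across the wormhole set on part of the fibers while remaining nonconstant on a single cell, so that the eigenfunctions decouple and localize. Because $\Delta$ is second order and a level-$n$ cell is $F_0$ rescaled by $d_n^{-1}$, the eigenvalues contributed by one cell are precisely $\{d_n^2\lambda_k\}$, where $\{\lambda_k\}$ is the spectrum of the corresponding (Dirichlet or Neumann) Laplacian on $F_0$.

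Next I would count the multiplicities. Following the counting arguments used for the interval, the number of level-$n$ cells is governed by $d_{n-1}$ and $j_n$ while the binary fiber structure contributes a factor $2^n$; writing $M_n$ for the resulting total multiplicity at level $n$, each level contributes $M_n d_n^{-2s}\sum_k \lambda_k^{-s}=M_n d_n^{-2s}\zeta_{F_0}(s)$, so that
\begin{equation}
\zeta_{F_\infty}(s)=\zeta_{F_0}(s)\Big(1+\sum_{n=1}^{\infty}M_n\, d_n^{-2s}\Big),
\end{equation}
up to the bookkeeping needed to separate the Dirichlet and Neumann cell spectra, which only replaces the single factor $\zeta_{F_0}$ by a finite combination of its boundary-condition variants without affecting meromorphy. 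The interval computation, in which all $k$-sums collapse onto $\zeta_R(2s)$ times rational factors in $2^s$, is the model for this step.

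For periodic data $j_i\equiv j$ one has $d_n=j^n$ and $M_n\sim C a^n$ for the growth constant $a$ read off from the multiplicity list (for $F_0=I$ one gets $a=2j$). The series is then geometric in $x=a\,j^{-2s}=a\,e^{-2s\log j}$,
\begin{equation}
1+\sum_{n=1}^{\infty}M_n\,d_n^{-2s}=1+\frac{C\,x}{1-x},
\end{equation}
a rational function of $e^{-2s\log j}$ and hence meromorphic on all of $\mathbb{C}$. This yields the second assertion at once: a product of two functions meromorphic on $\mathbb{C}$ is meromorphic, so meromorphy of $\zeta_{F_0}$ forces meromorphy of $\zeta_{F_\infty}$. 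The new poles occur where $x=1$, i.e. at
\begin{equation}
s=\frac{\log a}{2\log j}+\frac{\pi i\,k}{\log j},\qquad k\in\mathbb{Z},
\end{equation}
and they are simple because $1-x$ has simple zeros there. This is a vertical tower of simple poles standing over the point $\frac{\log a}{2\log j}$, which one identifies with the abscissa of convergence and hence with (half) the spectral dimension of $F_\infty$; since the rapid multiplicity growth makes this abscissa strictly larger than that of $\zeta_{F_0}$, the tower is the rightmost family of singularities, which is the sense in which it sits above the spectral dimension.

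The step I expect to be the main obstacle is the first one: establishing the localization and the exact multiplicity count for a genuinely higher-dimensional base $F_0\subset\mathbb{R}^d$. In the interval case this rested on the degree-four wormhole vertices and one-dimensional Kirchhoff conditions, whereas here one must show that the self-similar cell boundaries inherit the role of the wormholes, that the empty-interior hypothesis lets cell-supported eigenfunctions meet the matching conditions, and that the cell-supported functions together with the pullbacks $\Phi_n^{*}$ of base eigenfunctions exhaust $\sigma(\Delta)$ with the stated multiplicities. Once the decomposition and the constants $M_n$ are secured, the factorization and pole analysis are routine. A secondary issue is the non-periodic case: when $\{j_i\}$ is merely such that $r$ exists, the tower factor is a general Dirichlet series rather than a rational function, so the exact tower of simple poles and the automatic meromorphy are guaranteed only for periodic data, and I would state the tower conclusion under that hypothesis.
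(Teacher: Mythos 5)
Your proposal is correct and follows essentially the same route as the paper: both decompose the spectrum into pullbacks of $\sigma(\Delta_0)$ plus cell-localized eigenfunctions supported on scaled copies of $F_0$, so that the spectral zeta function becomes $\zeta_{F_0}(s)$ times a geometric series in the level $n$, whose summation produces the vertical tower of simple poles and whose term-by-term meromorphy gives the second claim. The paper is terser---delegating the localization and matching-condition argument to cited references and treating the Dirichlet/Neumann variants as additional towers, just as you do---while you make the factorization, the multiplicity growth $M_n \sim C a^n$, and the explicit pole locations concrete, but the underlying decomposition is the same.
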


\begin{proof}
As shown in \cite{SteinhurstTeplyaevA} the projective limit construction in this case will be a non-negative real and discrete spectrum. Because it is assumed that $F_{1}$ can be realized as an assembly of identical pieces that are scaled copies of $F_0$ that overlap only on $B_1$ which has empty interior then $Dom(\Delta_1)  = \phi^{*}_1(Dom(\Delta_0)) \oplus \mathfrak{F}_1$ where $\mathfrak{F}_1$ are the eigenfunctions that are orthogonal to $\phi^{*}_iDom(\Delta_0)$ in $L^{2}(F_1)$. By a geometrical argument these eigenfunction are piece-wise defined as eigenfunctions on scaled copied of $F_0$ with suitable matching conditions to assure the orthogonality. See \cite{RS2009} for the case of Laakso spaces. Thus $\sigma(\Delta_i|_{\mathfrak{F}_1}) = c_1\sigma(\Delta_0)$ for some $c_1$. By the self-similarity of the cell structure the constant $c_1$ is the same for all $n$ not just $n=1$. This gives rise to a geometric series over $n$ whose summation has a series of simple poles over the spectral dimension of $L$. Such series will be the topic of Section \ref{sec:finite}. Another series of eigenfunctions could occur due to Neumann boundary conditions but these will also have the same scaling and will merely provide another tower of simple poles. 

The spectral zeta function over $F_{\infty}$ is the sum over $n$ of scaled copied of the spectral zeta function over $F_0$ plus a finite number of bootstrap terms. Because this is actually the same geometric summation as in the previous paragraph the sum is a meromorphic function after regularization if and only if each term is meromorphic.
\end{proof}

\section{Casimir Effect}\label{sec:CE}
The Casimir effect arising between conductors and the quantum vacuum can be viewed as a consequence of vacuum zero-point energy. Simply put, 
displacing conductors generates new boundary conditions for the quantized vacuum, which in turn alters the zero-point energy and gives rise to a negative energy gradient. It is experimentally verified \cite{Sparnaay1958} that two parallel uncharged conducting plates experience an attractive pressure given by

\begin{equation*}
|P_C|=\frac{ \pi^2 c \hbar}{240 d^4}.
\end{equation*}

That such an attraction has its origins in relativistic quantum mechanics is reflected by the appearance of both Planck's quantum mechanical $\hbar$ and the relativistic $c$. However, the direction of the Casimir pressure generally depends on the geometry of the conductors with which one is working. While in the case of plates and cylinder the attraction is positive, both spherical shells and Laakso spaces exhibit repulsion \cite{Canaguier2011}.

In computing the the Casimir effect on Laakso spaces, we take the vacuum expectation of a self-adjoint Hamiltonian operator which represents the quantized electromagnetic field and whose spectrum yields the permissible energies for the system. In particular, as our boundary conditions depend on on some displacement parameter $d$, we obtain

\begin{equation*}
E_{vac}(d)=\langle 0 | H(d) | 0 \rangle \propto \sum_{\lambda \in \sigma(\Delta)} \omega_{\lambda}(d) \propto \sum_{\lambda \in \sigma(\Delta)} \sqrt{ \lambda(d)} \propto \zeta_{L(d)} (-1/2).
\end{equation*}

The Casimir pressure will therefore be proportional to a derivative of the Laakso spectral zeta function evaluated at $-1/2$.  In \cite{KKPSS}, the authors looked at the Casimir effect on $j_n=j$ Laakso spaces that arose from conducting plates attached at nodes in the $F_1$ graph approximation and placed symmetrically about the center.  At each point of intersection with L, the conducting plates imposed Dirichlet boundary conditions on eigenfunctions and Kirchoff boundary conditions were maintained at the other nodes.  Following the above outline, we then computed a modified spectral zeta function subject to these new boundary conditions. To make sense of the energy gradient, we allowed the plates to move symmetrically from their original locations compressing and stretching the underlying space in a natural way. Moving plates closer together compressed the interior space and stretched the exterior. Conversely, moving  the plates away from each other stretched the interior space and compressed the exterior.

\section{Finite Approximations to Laakso Spaces}\label{sec:finite}
In this section we consider the Casimir effect on $F_m$ with Laplacian $\Delta_m$ in the case of two perfectly conducting plates placed at opposite ends of the unit interval. These boundary conditions are simply the Dirichlet boundary conditions. By truncating the counting arguments mentioned in Section \ref{sec:laakso} we see that the spectrum of $\Delta_m$ is given by 
\begin{eqnarray}
	\sigma(\Delta_m) &=& \bigcup_{k=1}^{\infty} \{\pi^2 k^2\}\cup\bigcup_{n=1}^{m} \bigcup_{k=0}^{\infty} \{(k+1/2)^2\pi^2 d_n^2\} \cup\bigcup_{n=1}^{m} \bigcup_{k=1}^\infty \{k^2\pi^2d_n^2\} \notag \\ 
	&& \cup\bigcup_{n=2}^{m} \bigcup_{k=1}^\infty \{k^2\pi^2d_n^2\} \cup \bigcup_{n=2}^{m} \bigcup_{k=1}^{\infty} \left\{\frac{k^2\pi^2 d_n^2}{4}\right\}
\end{eqnarray}
With multiplicities
\begin{equation}
	1,\hspace{.5cm} 2^n,\hspace{.5cm} 2^{n-1}(j_n-2)d_{n-1},\hspace{.5cm} 2^{n-1}(d_{n-1}-1),\hspace{.5cm} 2^{n-2}(d_{n-1}-1)
\end{equation}
From which the spectral zeta function for $\Delta_m$, denoted $\zeta_m(s)$, can be easily calculated in the case where $j_i = j$ and $m \ge 4$ to be
\begin{eqnarray}
	\zeta_m(s) &=& \frac{\zeta_{R}(2s)}{\pi^{2s}} \left[ 1 + \frac{1-(2j^{-2s})^{m}}{j^{2s}(1-2j^{-2s})}\left( 2^{1+2s}-3-\frac{2}{1-2j^{-2s}} \right) \right. \notag \\
	&& + \left. \frac{1-(2j^{1-2s})^{m}}{j^{2s}(1-2j^{1-2s})} \left(j + \frac{4^{2s+2}j^{1-2s}}{j^{2s}(1-2j^{1-2s})} \right) \right]
\end{eqnarray}

The observed poles arise from the use of the summation formula for geometric series
\begin{equation}
	\sum_{n=1}^{m} r^{n} = r\frac{1-r^{m}}{1-r},
\end{equation}
which are not removable due to the interactions between the several sums being taken simultaneously. Since the poles found are outside of the domain of convergence for the summation over $k$ these poles can only be approached through an analytic continuation and so we can choose to represent the summations using this formula since the extended functions will agree on an open subdomain and hence everywhere. 

\begin{prop}\label{prop:finiteperiodic}
For all $T-$periodic sequences $\{j_i\}$ the spectral zeta function corresponding to $\Delta_n$ on $F_n$ for $n \ge 3T$ have towers of complex poles. 
\end{prop}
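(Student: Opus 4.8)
The plan is to imitate, for a general $T$-periodic sequence, the explicit computation already carried out for the constant case $j_i=j$, and then to read off the pole structure from the resulting closed form. First I would write $\zeta_n(s)$ as the sum over the five families appearing in $\sigma(\Delta_n)$, perform the inner sum over $k$ in each family (which produces the common factor $\pi^{-2s}\zeta_R(2s)$ together with elementary entire factors such as $2^{2s}-1$ and $4^{s}$), and reduce everything to finite sums over the level index $n'$ of the shape $\sum_{n'} 2^{n'} d_{n'-1}^{a} d_{n'}^{-2s}$ with $a\in\{0,1\}$. These are the only building blocks, so the nontrivial part of $\zeta_n(s)$ is a fixed linear combination of such sums.

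Next I would exploit the periodicity. Writing $n' = qT+\rho$ with $0\le\rho<T$ and setting $P = d_T = \prod_{i=1}^{T} j_i$, one has $d_{n'} = P^{q} d_{\rho}$ and $2^{n'} = (2^{T})^{q} 2^{\rho}$, so each building block factors as an inner sum over the residue $\rho$ (a fixed finite combination of the periodic data) times an outer geometric sum over complete periods $q$, with ratio either $R_1 = 2^{T} P^{-2s}$ (from the $a=0$ blocks) or $R_2 = 2^{T} P^{1-2s} = P\,R_1$ (from the $a=1$ blocks). Summing the outer series by $\sum_{q} R^{q}=R(1-R^{Q})/(1-R)$, with $Q=\lfloor n/T\rfloor$ the number of complete periods, introduces the denominators $1-R_1$ and $1-R_2$. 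Solving $R_1=1$ gives $P^{2s}=2^{T}$, i.e.
\[
s = \frac{T\log 2}{2\log P} + \frac{k\pi i}{\log P}, \qquad k \in \mathbb{Z},
\]
and $R_2=1$ gives the shifted family $s=\tfrac12+\tfrac{T\log 2}{2\log P}+\tfrac{k\pi i}{\log P}$. For $k\ne 0$ these are non-real, equally spaced in the imaginary direction with gap $\pi/\log P$, and lie on the two vertical lines $\operatorname{Re} s=\tfrac{\log 2}{2\log r}$ and $\operatorname{Re} s=\tfrac12+\tfrac{\log 2}{2\log r}$ (using $\log P=T\log r$); these are exactly the advertised towers.

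The delicate point, and the step I expect to be the real obstacle, is to show that these candidate poles actually survive, i.e. that they are not removable. A single truncated geometric sum $(1-R^{Q})/(1-R)$ is a polynomial and has no pole at $R=1$; a genuine pole can only arise from a term carrying $(1-R)^{-2}$, produced when the closed forms of the several coupled families are combined (the analogue of the $-2/(1-2j^{-2s})$ term in the constant-case formula). I would therefore expand the full linear combination, collect the coefficients of $(1-R_1)^{-2}$ and $(1-R_2)^{-2}$, and compute the residues at $R_i=1$; each is a finite sum over the residue classes $\rho$ whose leading part is proportional to $Q$. The crux is to verify that this combined residue does not cancel — this is precisely the ``interaction between the several sums'' phenomenon noted after the constant case, and it is where $T$-periodicity rather than a general sequence is used, since the per-period coefficients are then literally constant in $q$.

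Finally, the hypothesis $n\ge 3T$ enters to make the non-cancellation argument unconditional: it guarantees $Q\ge 3$ complete periods, so that the two families that only switch on at level $n'=2$ are present, all multiplicity formulas have reached their stable periodic regime, and the residue controlling each tower — being proportional to $Q$ — is strictly positive rather than an artifact of too few levels. One should also check that the prefactor $\zeta_R(2s)$ cannot annihilate an entire tower: its only pole is at $s=1/2$ and its zeros lie on the negative real axis and in the critical strip, so at most a sparse subset of the infinitely many equally-spaced tower points could be cancelled and infinitely many poles remain. I expect the residue non-cancellation, not the bookkeeping, to be where the argument could fail or require the full strength of periodicity.
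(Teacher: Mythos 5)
Your plan follows the same route as the paper's own (very terse) proof --- invoke the constant-$j$ computation as the blueprint, then fold the level sums by writing $n'=qT+\rho$ and summing over residues within a period and geometrically over periods --- and your bookkeeping is correct: the building blocks $\sum_{n'}2^{n'}d_{n'-1}^{a}d_{n'}^{-2s}$, the ratios $R_1=2^{T}P^{-2s}$ and $R_2=2^{T}P^{1-2s}$ with $P=d_T$, and the candidate tower locations $\operatorname{Re} s=\frac{\log 2}{2\log r}$ and $\operatorname{Re} s=\frac12+\frac{\log 2}{2\log r}$ are all right. However, the step you single out as the crux --- verifying that the combined residues at $R_i=1$ do not cancel --- is not merely delicate; it must fail. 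For any finite $n$ the level index runs over finitely many values, so after performing the $k$-sums, $\zeta_n(s)$ equals $\pi^{-2s}\zeta_R(2s)$ times a \emph{finite} linear combination, with entire coefficients (such as $2^{2s}-1$ and $2^{2s}$), of exponentials $d_{n'}^{-2s}$ and $d_{n'-1}d_{n'}^{-2s}$. That bracket is an entire function of $s$, so $\zeta_n(s)$ has at most one pole, at $s=1/2$, coming from $\zeta_R(2s)$. Consequently every apparent complex pole in any closed-form rearrangement is removable: as you yourself note, each truncated geometric sum $(1-R^{Q})/(1-R)$ is a polynomial, and adding finitely many functions regular at $R_i=1$ cannot create a singularity there, no matter how the several sums ``interact.'' The residues you propose to collect from the $(1-R_i)^{-2}$ terms are therefore identically zero, and no hypothesis $n\ge 3T$ (or any finite $n$) can change this, since the obstruction is structural rather than a shortage of periods.

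The towers of complex poles are genuine, but they belong to the limit object: only when the sum over the level index is infinite does $\sum_q R^q=R/(1-R)$ acquire actual poles at $R=1$, which is exactly how $\zeta_L(s)$ for the Laakso space itself gets its complex dimensions. So the gap is not in your execution but in the statement being proved: the proposition, and the paper's own argument for it, rest on the closed-form formula for $\zeta_m(s)$ whose term $-2/(1-2j^{-2s})$ produces an uncancelled pole where $j^{2s}=2$; since the true $\zeta_m(s)$ is manifestly $\zeta_R(2s)\pi^{-2s}$ times an entire function, that formula cannot be a correct identity, and the claim that the poles ``are not removable due to the interactions between the several sums'' is precisely the point that breaks down. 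Your instinct that the residue non-cancellation is ``where the argument could fail'' was exactly right --- it fails, for every finite approximation.
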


\begin{proof}
It has already been seen for constant sequences $j_i$. For periodic sequences with longer periods, the summation methods are augmented by summing over an individual period then summing over all periods. The requirement that $n$ is large enough for three periods is so that the geometric aspect of the summation over $n$ is fully present. 
\end{proof}

\section{Casimir Effect on L}\label{sec:CasimirL}

This is the original setting in which the authors considered a Casimir effect in \cite{KKPSS}. In addition to the boundary of the Laakso space playing the part of one pair of plates with Neumann boundary conditions we insert another symmetrically placed pair of plates in the interior so that the dependence of Casimir force on the separation of the plates can be explored. In this section we will consider for the sake of simplicity Laakso spaces with constant sequences $j_i=j$. The interior plates will be places symmetrically at level one wormhole locations. That is their location in the unit interval will be taken from $L_1$. Set $X_0$ to be one half the distance between the interior plates, this gives the distance of the plates from the ``center'' of the Laakso space. Let $Z$ be the number of nodes between the plates in the $F_1$ graph approximation of $L$. See Figure \ref{fig:Lsix} for an example.

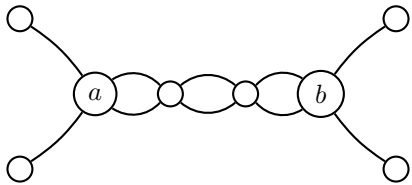
\begin{figure}[t]
\begin{center}
\begin{tikzpicture}[thick, scale=1]
\tikzstyle{every node}=[draw,shape=circle];
\node (11) at (0,2) {};
\node (12) at (0,0) {};
\node (21) at (1,1) {$a$};
\node (31) at (2,1) {};
\node (41) at (3,1) {};
\node (51) at (4,1) {$b$};
\node (61) at (5,2) {};
\node (62) at (5,0) {};

\path[-] (11) edge [bend  left=10] (21);
\path[-] (12) edge [bend  right=10] (21);

\path[-] (21) edge [bend  left=40] (31);
\path[-] (21) edge [bend  right=40] (31);

\path[-] (31) edge [bend  left=40] (41);
\path[-] (31) edge [bend  right=40] (41);

\path[-] (41) edge [bend  left=40] (51);
\path[-] (41) edge [bend  right=40] (51);

\path[-] (51) edge [bend left=10] (61);
\path[-] (51) edge [bend right=10] (62);
\end{tikzpicture}\end{center}
\caption{The $F_1$ graph for $j=5$ where the interior plates are placed at nodes $a$ and $b$. Here $X_0 = \frac{3}{5} \times \frac{1}{2} = \frac{3}{10}$ and $Z = 3$.}
\label{fig:Lsix}
\end{figure}

\begin{definition}
Given a Laakso space with $j_i = j$ and a symmetrically located pair of plates whose location is determined by $j$ and a chosen $Z$. The operator $\Delta'$ on $L^{2}(L,\mu)$ acts as $\Delta$ but with domain determined by imposing Dirichlet boundary conditions at the interior plates. 
\end{definition}

\begin{theorem}[\cite{KKPSS} Theorem 4.2]
The operator $\Delta'$ is self-adjoint and has spectrum with multiplicities as given in Figure \ref{fig:sigmaDelta'}
\end{theorem}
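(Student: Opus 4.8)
The plan is to reduce the spectral problem for $\Delta'$ to the analysis already carried out for $\Delta$ on all of $L$, exploiting the fact that the plates sit at level-one wormhole locations. First I would establish self-adjointness by noting that imposing Dirichlet data at the two plate vertices severs the Kirchhoff coupling there: a function in the domain of $\Delta'$ need only vanish at the plates, with no condition relating the incoming normal derivatives. Consequently $L^2(L,\mu)$ splits as an orthogonal direct sum $L^2(L_{\mathrm{int}}) \oplus L^2(L_{\mathrm{ext}})$, where $L_{\mathrm{int}}$ is the sub-Laakso space over the interval $[\tfrac12 - X_0, \tfrac12 + X_0]$ between the plates and $L_{\mathrm{ext}}$ is the union of the two symmetric sub-Laakso spaces over $[0,\tfrac12 - X_0]$ and $[\tfrac12 + X_0, 1]$. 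Since the plate coordinates lie in $L_1 \subset L_n$ for every $n$, this decomposition is compatible with each graph approximation $F_n$, and $\Delta'$ acts as the direct sum of the Laplacian on $L_{\mathrm{int}}$ with Dirichlet conditions at both ends and the Laplacian on $L_{\mathrm{ext}}$ with Dirichlet conditions at the plates and Neumann conditions at the outer boundary. Self-adjointness of $\Delta'$ then follows from that of each summand, obtained from the projective-limit construction exactly as in the theorem of \cite{RS2009} quoted above.

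Next I would compute $\sigma(\Delta')$ by applying, to each summand separately, the localization-and-counting method used to obtain $\sigma(\Delta)$: an eigenfunction may be taken supported between consecutive wormholes and constant on all but two incoming edges at each interior degree-four vertex, so that the problem reduces to interval eigenvalue problems of length $d_n^{-1}$ or $2 d_n^{-1}$ with Dirichlet or Dirichlet--Neumann endpoint conditions. Because the plates coincide with level-one wormholes, no cell of level $n \geq 1$ straddles a plate; hence every such localized eigenfunction lies entirely in $L_{\mathrm{int}}$ or entirely in $L_{\mathrm{ext}}$, its eigenvalue is inherited unchanged from the $\Delta$ computation, and only the number of supporting subgraphs available in each region is altered. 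The coarse, fiber-constant modes are the ones genuinely affected: on $L_{\mathrm{int}}$ they become the Dirichlet eigenvalues $k^2\pi^2/(2X_0)^2$ of a segment of length $2X_0$, while on $L_{\mathrm{ext}}$ they become the Dirichlet--Neumann eigenvalues $(k+\tfrac12)^2\pi^2/(\tfrac12 - X_0)^2$, the latter carrying an extra factor of two from the two symmetric exterior pieces.

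The principal obstacle is the multiplicity bookkeeping. In the original theorem the multiplicities $1,\, 2^n,\, 2^{n-1}(j_n-2)d_{n-1},\, 2^{n-1}(d_{n-1}-1),\, 2^{n-2}(d_{n-1}-1)$ arise from counting, level by level, the subgraphs of $F_n$ supporting each family of localized eigenfunctions. Redoing this for $\Delta'$ requires partitioning those subgraphs between $L_{\mathrm{int}}$ and $L_{\mathrm{ext}}$ according to how many level-one cells each region contains --- here the parameter $Z$ governs the interior counts and $j - Z$ the exterior counts --- and, most delicately, correcting the tally for the cells adjacent to the plates, whose endpoint conditions switch from Kirchhoff to Dirichlet. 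I would organize the final answer into a block for the coarse modes and a block for the level-$n$ local modes, verifying in each that the interior multiplicities, the doubled exterior multiplicities, and the plate-adjacent corrections sum to the entries displayed in Figure \ref{fig:sigmaDelta'}. Working out the base cases $n = 1, 2$ by hand, and checking that away from the plate corrections the interior and exterior local-mode counts recombine into the corresponding entries of $\sigma(\Delta)$, would serve as the main consistency checks.
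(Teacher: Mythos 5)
The paper itself contains no proof of this theorem: it is imported verbatim from \cite{KKPSS}, and the present text only sketches the underlying method (localization of eigenfunctions between wormholes and counting of the supporting subgraphs, Section \ref{sec:laakso}). Measured against that method, your overall strategy is the right one: Dirichlet conditions at the plates decouple the operator into an interior piece and two exterior pieces, each handled by the projective-limit argument of \cite{RS2009}, and your identification of the two coarse fiber-constant families (multiplicities $1$ and $2$) is correct.

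The genuine gap is your claim that every localized eigenfunction has ``its eigenvalue \dots inherited unchanged from the $\Delta$ computation, and only the number of supporting subgraphs available in each region is altered.'' That is true only when the plates sit at their natural, unstretched positions, i.e.\ when $2X_0=(Z+1)/j$. The theorem is a statement for general $X_0$: in the setup of Section \ref{sec:CE}, displacing the plates compresses or stretches the underlying space, so an interior cell of level $n$ has length $d_n^{-1}\cdot\frac{2jX_0}{Z+1}$, an exterior cell has length $d_n^{-1}\cdot\frac{j(1-2X_0)}{j-(Z+1)}$, and every localized eigenvalue picks up the corresponding inverse-square factor. This is exactly why entries $3$ through $10$ of Figure \ref{fig:sigmaDelta'} carry the factors $\frac{Z+1}{2jX_0}$ and $\frac{1-(Z+1)/j}{1-2X_0}$ rather than the bare $d_n^2$ appearing in $\sigma(\Delta)$. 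Under your claim, only the two coarse families would depend on $X_0$, which contradicts the spectrum you are trying to prove and would also be fatal downstream: the Casimir force in Figure \ref{fig:CasForce} is the $X_0$-derivative of $\zeta_{j,X_0,Z}(-1/2)$, and most of its terms come precisely from the $X_0$-dependence of the localized families. The repair is mechanical --- run your interval reductions in the stretched metric of each region instead of with lengths $d_n^{-1}$ --- but as written this step fails. Two smaller points: the interior contains $Z+1$ level-one cells and the exterior $j-(Z+1)$, not $j-Z$; and your multiplicity bookkeeping is only a plan, whereas the plate-adjacent corrections are where the nontrivial entries such as $j-Z-3$ and $2^{n-2}[\frac{Z+1}{j}I_{n-1}-1]$ actually get produced, so they must be carried out, not just promised.
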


\begin{figure}[t]
\begin{eqnarray*}
	\sigma(\Delta') = && \bigcup_{k=1}^\infty \left\{\left[ \frac{ k \pi}{2X_{0}} \right]^{2}\right\} \cup  \bigcup_{k=0}^\infty \left\{\left[ \frac{ (k+1/2) \pi}{(1-2X_{0})/2} \right]^{2}\right\} \notag \\
	& \cup & \bigcup_{k=0}^\infty \left\{\left[(k+1/2) \pi \frac{j-(Z+1)}{1-2X_{0}} \right]^{2}\right\}\cup \bigcup_{k=1}^\infty \left\{\left[k \pi \frac{j-(Z+1)}{1-2X_{0}} \right]^{2}\right\} \notag \\ 
	&\cup & \bigcup_{k=1}^\infty  \left\{\left[ k \pi \frac{Z+1}{2X_{0}} \right]^{2}\right\} \cup \bigcup_{n=2}^{\infty}  \bigcup_{k=0}^\infty \left\{\left[ I_{n}(k+1/2) \pi \frac{1-\frac{Z+1}{j}}{1-2X_{0}} \right]^{2}\right\} \notag \\ 
	&\cup & \bigcup_{n=2}^{\infty} \bigcup_{k=1}^\infty \left\{\left[I_{n} k \pi \frac{1-\frac{Z+1}{j}}{1-2X_{0}} \right]^{2}\right\}  \cup \bigcup_{n=2}^{\infty} \bigcup_{k=1}^\infty \left\{\left[ I_{n} k \pi \frac{1-\frac{Z+1}{j}}{2(1-2X_{0})} \right]^{2}\right\} \notag \\ 
	&\cup &  \bigcup_{n=2}^{\infty} \bigcup_{k=1}^\infty \left\{\left[I_{n}k \pi \frac{Z+1}{2jX_{0}} \right]^{2}\right\} \cup \bigcup_{n=2}^{\infty} \bigcup_{k=1}^\infty \left\{\left[ I_{n}k \pi \frac{Z+1}{4jX_{0}} \right]^{2}\right\}. 
\end{eqnarray*}
and multiplicities are listed in the same order
\begin{enumerate}
	\item[1)] $1$;
	\item[2)] $2$;
	\item[3)] $2$;
	\item[4)] $j-Z-3$;
	\item[5)] $Z+1$;
	\item[6)] $2^{n}$;
	\item[7)] $(1-\frac{Z+1}{j})I_{n-1}2^{n-1}(j-2)+2^{n-1}(1-\frac{Z+1}{j})I_{n-1}$;
	\item[8)] $2^{n-2}[(1-(\frac{Z+1}{j})I_{n-1}-1]-2^{n-2}$;
	\item[9)] $\frac{Z+1}{j}I_{n-1}2^{n-1}(j-2)+2^{n-1}\frac{Z+1}{j}I_{n-1}+2^{n-1}$;
	\item[10)] $2^{n-2}[\frac{Z+1}{j}I_{n-1}-1]$.
\end{enumerate}
\caption{}
\label{fig:sigmaDelta'}
\end{figure}

Using this it is a tedious but straightforward task to calculate $\zeta_{\Delta'}(s)$. Since we will be interested in how $\zeta_{\Delta'}(s)$ varies as $j$, $Z$, and $X_0$ are varied we will make the dependence explicit by writing $\zeta_{\Delta'}(s) = \zeta_{j,X_0,Z}(s)$.

\begin{cor}
Given a Laakso space with $j_i =j$, and conducting plates placed according to $X_0$ and $Z$ we have the spectral zeta function given in Figure \ref{fig:zetajXZ}.
\end{cor}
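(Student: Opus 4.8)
The plan is to compute $\zeta_{j,X_0,Z}(s)$ by direct substitution of the spectrum and multiplicities of Figure \ref{fig:sigmaDelta'} into the defining series $\sum_i g_i\lambda_i^{-s}$ and then evaluating each of the ten eigenvalue families in closed form. Every eigenvalue listed has the shape $\lambda=(c\,k\pi)^2$ or $\lambda=(c\,(k+1/2)\pi)^2$ for a constant $c$ depending on $j$, $Z$, $X_0$ and, for families 6--10, on the level $n$ through the scaling factor $I_n$. Hence $\lambda^{-s}=c^{-2s}\pi^{-2s}$ times either $k^{-2s}$ or $(k+1/2)^{-2s}$, and the inner sum over $k$ collapses to a Riemann zeta value. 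Concretely I would use
\[
\sum_{k=1}^\infty k^{-2s}=\zeta_R(2s),\qquad \sum_{k=0}^\infty (k+\tfrac12)^{-2s}=(2^{2s}-1)\,\zeta_R(2s),
\]
the second being the Hurwitz identity $\zeta_H(2s,\tfrac12)=(2^{2s}-1)\zeta_R(2s)$. This pulls a common factor $\pi^{-2s}\zeta_R(2s)$ out front, exactly as in the interval and $\zeta_m(s)$ computations recorded in Section \ref{sec:zeta} and Section \ref{sec:finite}.

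First I would treat the five finite families (items 1--5), which carry no sum over $n$: each contributes a single term of the form $g\cdot c^{-2s}\pi^{-2s}\zeta_R(2s)$, with the half-integer families picking up the extra $(2^{2s}-1)$ factor, and these assemble into the ``bootstrap'' part of the answer. Then I would handle the five infinite towers (items 6--10). For each fixed $n$ the sum over $k$ is done as above; the multiplicities supply the $n$-dependence through the factors $2^n$ and $I_{n-1}$, while the eigenvalue constant supplies $I_n^{-2s}$. The residual sum over $n\ge 2$ is geometric, and with $I_n$ a power of $j$ the common ratios are $2j^{-2s}$ and $2j^{1-2s}$ — precisely the ratios already visible in the $\zeta_m(s)$ formula — so I would close them using $\sum_{n\ge 2}r^n=r^2/(1-r)$. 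As flagged in Section \ref{sec:finite}, these sums converge only for large $\Re s$, so I would invoke the identity theorem to justify using the geometric-series formula as the analytic continuation on the rest of the plane.

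The hard part will be the bookkeeping rather than any single idea: correctly tracking the $c^{-2s}$ prefactors across all ten families and combining the several geometric series without error. The ``interactions between the several sums'' noted in Section \ref{sec:finite} mean that towers sharing a ratio must be merged before simplifying, and I expect the two scalings in families 9 and 10 (the $\tfrac{Z+1}{2jX_0}$ versus $\tfrac{Z+1}{4jX_0}$ factor of two) to require the most care, since they produce two distinct towers that must not be conflated. A final consistency check I would perform is to specialize to the plate-free limit and confirm that the result degenerates to the known $\zeta_L(s)$ of Section \ref{sec:zeta}, which would validate both the prefactor bookkeeping and the merging of the geometric series.
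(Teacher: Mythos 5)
Your approach matches the paper's: the corollary is established by direct substitution of the spectrum and multiplicities of Figure \ref{fig:sigmaDelta'} into the defining series $\sum_i g_i \lambda_i^{-s}$, which the paper treats as ``tedious but straightforward'' and for which it offers no further argument. One caveat: Figure \ref{fig:zetajXZ} leaves every $k$- and $n$-sum unevaluated, so the closed-form evaluation you plan (the Hurwitz identity, pulling out $\pi^{-2s}\zeta_R(2s)$, and closing the geometric series over $n$) is correct but goes beyond what this corollary asserts; that extra work belongs to the subsequent Casimir-force computation rather than to the statement at hand.
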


\begin{figure}[t]
\begin{eqnarray*}
\zeta_{j,X_{0}, Z}(s) =&&  \sum _{k=0}^\infty \frac{2}{[(2k+1) \pi/(1-2X_{0})]^{2s}} +\sum_{k=1}^\infty \frac{1}{[k \pi/(2X_{0})]^{2s}} \notag \\
	& + & \sum _{k=1}^\infty \frac{(j-Z-3)}{[jk \pi\frac{(1-(Z+1)/j)}{1-2X_{0}}]^{2s}}+\sum _{k=1}^\infty \frac{Z+1}{[k \pi(Z+1) / (2X_{0})]^{2s}} \notag \\ 
	& + & \sum_{n=1}^\infty \sum_{k=0}^\infty \frac{2^{n}}{[I_{n}(k+1/2) \pi (1-\frac{Z+1}{j})/(1-2X_{0})]^{2s}} \notag \\ 
	& + & \sum_{n=2}^\infty \sum_{k=1}^\infty \frac{(1-\frac{Z+1}{j})2^{n-1}I_{n-1}(j-2)+2^{n-1}(1-(Z+1)/j) I_{n-1}}{[I_{n}k \pi \frac{(1-(Z+1)/j)}{(1-2X_{0})}]^{2s}} \notag \\
	& + & \sum_{n=2}^\infty \sum_{k=1}^\infty \frac{ 2^{n-2}[(1-(Z+1)/j)I_{n-1}-1]-2^{n-2}}{ [ I_{n}k \pi(1-(Z+1)/j)/[2(1-2X_{0})]]^{2s}} \notag \\  
	& + & \sum_{n=2}^\infty \sum_{k=1}^\infty \frac{(Z+1)/j[2^{n-1}I_{n-1}(j-2)+2^{n-1}I_{n-1}]+2^{n-1}}{[k \pi I_{n}(Z+1)/(2jX_{0})]^{2s}} \notag \\  
	& + & \sum_{n=2}^\infty \sum_{k=1}^\infty  \frac{(Z+1)/j[2^{n-2}I_{n-1}]-2^{n-2}}{[I_{n}k \pi / (4jX_{0})]^{2s}}.
\end{eqnarray*}
\caption{}
\label{fig:zetajXZ}
\end{figure}

\begin{figure}[t]
\includegraphics[scale=1]{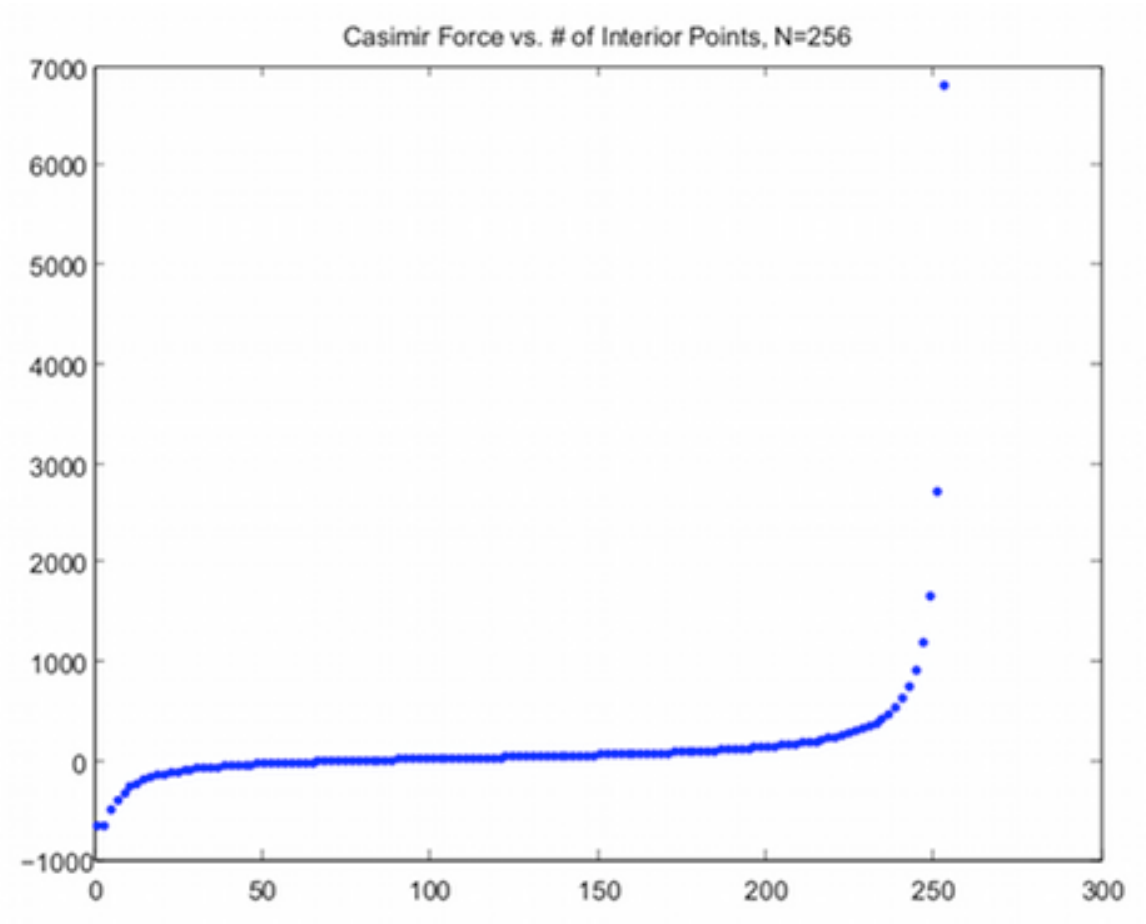}
\caption{Values of the Casimir force plotted for $j=256$ and $Z$ ranging between $1$ and $125$. Notice that for large $Z$ we see an interaction between the plates and the boundary of the Laakso space while for small $Z$ see a similar interaction between the two plates.}
\label{fig:N256}
\end{figure}

\begin{prop}
The Casimir energy of a Laakso space given by $j_i=j$ and plates positioned according to $X_0$ and $Z$ is proportional to $\zeta_{j,X_0,Z}(-1/2)$ and the self-exerted Casimir force due to this energy is proportional to Figure \ref{fig:CasForce}.
\begin{figure}[t]
\begin{eqnarray*}
	F_C(j,Z) & \propto & \frac{d}{dx} \zeta_{j,x,Z}\left( -\frac{1}{2} \right) |_{x=X_0} \\
	&=& \frac{ (j-(Z+1))}{24(1-2j)(1-2X_{0})^{2}} - \frac{ (j-(Z+3))(j-(Z+1))}{12(1-2X_{0})^{2}} \\
	&& - \frac{ j^{3} (j-2)(1-\frac{Z+1}{j})^2}{12 (1-2j^{2})(1-2X_{0})^2}
-\frac{(1-\frac{Z+1}{j})j^{2}(j-(Z+1))}{24 (1-2X_{0})^{2}(1-2j^{2})} \\
	&& +\frac{(1-\frac{Z+1}{j})j^2}{24 (1-2X_{0})^{2}(1-2j)} + \frac{(Z+1)^{2}}{48 X_{0}^{2}} + \frac{(Z+1)^{2} (j-2)}{24 X_{0}^{2} (1-2j^{2})} \\
	&& + \frac{ j (Z+1)^{2}}{96 (1-2j^{2}) X_{0}^{2}} + \frac{1}{6 (1-2X_{0})^{2}} - \frac{ j^{2} (1-\frac{Z+1}{j})}{24 (1-2j) (1-2X_{0})^{2}} \\
	&&+\frac{ j (Z+1)}{96 X_{0}^{2}(1-2j)}-\frac{j^{2} (1-\frac{Z+1}{j})}{12 (1-2X_{0})^{2}(1-2j)} + \frac{1}{48 X_{0}^{2}} + \frac{(Z+1) j}{48 X_{0}^{2} (1-2j)}
\end{eqnarray*}
\caption{}
\label{fig:CasForce}
\end{figure}
\end{prop}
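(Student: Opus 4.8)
\section*{Proof proposal}

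The first assertion is immediate from the framework of Section~\ref{sec:CE}. There the vacuum energy of the quantized field is $E_{vac}(d) = \langle 0 | H(d) | 0 \rangle \propto \sum_{\lambda \in \sigma(\Delta')} \sqrt{\lambda} \propto \zeta_{\Delta'}(-1/2)$, and since $\zeta_{\Delta'} = \zeta_{j,X_0,Z}$ by definition we obtain $E_{vac} \propto \zeta_{j,X_0,Z}(-1/2)$, the spectrum $\sigma(\Delta')$ being supplied by the preceding theorem and Figure~\ref{fig:sigmaDelta'}. The self-exerted force is minus the gradient of this energy as the plates are displaced, so that $F_C \propto \frac{d}{dx}\zeta_{j,x,Z}(-1/2)|_{x=X_0}$ up to a universal constant which I absorb into the proportionality. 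The content of the proposition is therefore the explicit evaluation of this derivative.

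The plan is to carry out the differentiation and the evaluation at $s=-1/2$ term by term on the closed form of $\zeta_{j,X_0,Z}(s)$ recorded in Figure~\ref{fig:zetajXZ}. First I would reduce each $k$-sum to the Riemann zeta function via the two identities $\sum_{k\ge 1} k^{-2s} = \zeta_R(2s)$ and $\sum_{k\ge 0}(k+1/2)^{-2s} = (2^{2s}-1)\zeta_R(2s)$; every summand then becomes a coefficient (a polynomial in $j$ and $Z$) times a pure power of the relevant length scale, either $(1-2X_0)^{2s}$ or $X_0^{2s}$, times a fixed combination of $\zeta_R(2s)$ and $\pi^{-2s}$. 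Next I would perform the sums over $n$: with $I_n$ growing geometrically like $j^n$, each such sum is geometric with ratio $2j^{-2s}$ or $2j^{1-2s}$ (the latter whenever an $I_{n-1}$ or a factor $j-2$ is present), so $\sum_n r^n$ is summed by the formula $r(1-r^m)/(1-r)$ and continued past its radius of convergence. This produces exactly the two denominators $1-2j^{-2s}$ and $1-2j^{1-2s}$, which at $s=-1/2$ specialize to the factors $1-2j$ and $1-2j^2$ seen throughout Figure~\ref{fig:CasForce}, and accounts for the towers of poles guaranteed by Proposition~\ref{prop:finiteperiodic}.

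With $\zeta_{j,x,Z}(s)$ now written as a finite sum of terms of the shape $C(j,Z,s)\,(1-2x)^{2s}$ and $C(j,Z,s)\,x^{2s}$, differentiation in $x$ is routine: $\frac{d}{dx}(1-2x)^{2s} = -4s(1-2x)^{2s-1}$ and $\frac{d}{dx}x^{2s} = 2s\,x^{2s-1}$. Evaluating at $s=-1/2$ I would use $\zeta_R(-1) = -\tfrac{1}{12}$, note that $2^{2s}-1 \to -\tfrac12$ so that the half-integer sums contribute $\tfrac{1}{24}$, and observe that the common factor $\pi^{-2s}\to\pi$ together with the numerical constant from $-4s$ or $2s$ is absorbed into the overall proportionality. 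The exponent $2s-1$ becomes $-2$, which is the origin of the uniform $(1-2X_0)^{-2}$ and $X_0^{-2}$ dependence of the force.

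The remaining step, and the only laborious one, is the bookkeeping: matching the resulting collection against the thirteen summands of Figure~\ref{fig:CasForce}. The splitting into thirteen terms reflects that several lines of Figure~\ref{fig:zetajXZ} carry composite coefficients --- for instance the $(j-Z-3)$ weight, the separate $Z+1$ multiplicities, and the numerators of the $n$-summed lines which each break into a $(j-2)$ piece and a leftover piece after the geometric summation --- so each such line produces two or more contributions to the force. I expect this algebraic collection, rather than any conceptual point, to be the main obstacle: the analytic-continuation and differentiation steps are justified uniformly by the meromorphic continuation of $\zeta_{j,X_0,Z}$ through $\zeta_R$, whereas keeping the polynomials in $j$ and $Z$ and the denominators $1-2j$, $1-2j^2$ correctly paired across all ten spectral families is where errors are easiest to make. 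A useful internal check is that reducing to the $F_1$-symmetric placement of the plates should recover the one-dimensional Casimir computation of \cite{KKPSS}.
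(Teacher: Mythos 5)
Your outline is correct, but note what the paper actually does here: its entire proof is the sentence ``See \cite{KKPSS} for details.'' So you have not reproduced the paper's argument --- you have supplied the computation the paper delegates to its reference, and your route is the right one. Reducing each $k$-sum of Figure \ref{fig:zetajXZ} via $\sum_{k\ge 1}k^{-2s}=\zeta_R(2s)$ and $\sum_{k\ge 0}(k+1/2)^{-2s}=(2^{2s}-1)\zeta_R(2s)$, summing the $n$-series geometrically with ratios $2j^{-2s}$ and $2j^{1-2s}$ (which at $s=-1/2$ yield precisely the $1-2j$ and $1-2j^{2}$ denominators of Figure \ref{fig:CasForce}), then differentiating the powers $(1-2x)^{2s}$ and $x^{2s}$ and setting $s=-1/2$ with $\zeta_R(-1)=-\tfrac{1}{12}$ is exactly the mechanism that generates the displayed force. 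A spot check confirms it lands on the stated table with a uniform proportionality constant: the first line of Figure \ref{fig:zetajXZ} evaluates at $s=-1/2$ to $\pi/(6(1-2X_0))$, whose $x$-derivative is $\pi/(3(1-2X_0)^{2})$, and the $k$-sum with scale $2X_0$ evaluates to $-\pi/(24X_0)$, whose derivative is $\pi/(24X_0^{2})$; these are $2\pi$ times the summands $1/(6(1-2X_0)^{2})$ and $1/(48X_0^{2})$ of Figure \ref{fig:CasForce}, respectively. Three small corrections to your write-up: (i) the $n$-sums here run to infinity, so the continued geometric formula is $r/(1-r)$, not the truncated $r(1-r^{m})/(1-r)$ you quote, which belongs to the finite approximations $F_m$ of Section \ref{sec:finite}; (ii) Figure \ref{fig:CasForce} has fourteen summands, not thirteen, though your point that single lines of Figure \ref{fig:zetajXZ} split into several force terms stands; (iii) the force is \emph{minus} the energy gradient, which is harmless under the proposition's ``$\propto$'' but must be tracked if one wants the advertised direction of the self-exerted force. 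What your version buys over the paper's is a self-contained, checkable derivation; what the paper's citation buys is brevity and consistency with \cite{KKPSS}, where the bookkeeping you rightly identify as the only laborious step was originally carried out.
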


\begin{proof}
See \cite{KKPSS} for details. 
\end{proof}

To see how Casimir force can vary with $Z$ for a given $j$ see Figure \ref{fig:N256}. Inspecting the expression in Figure \ref{fig:CasForce} it is readily apparent that the force depends on the plate separation $X_0$ as $a^{-2}$ instead of the expected $a^{-d_s-1}$. The parameter $Z$ represents how many cells separate the plates, a sort of geometric distance. The dependence on $Z$ is easily seem to be quadratic in Figure \ref{fig:CasForce}. Recall that Laakso spaces all have walk dimension $d_w=2$ so $d_s = d_h = 1+\frac{log(r)}{\log(2)}$. The reason this happens is that $\Delta$ and consequently $\Delta'$ are truely one-dimensional operators rather than $d_s$-dimensional operators so the rapid grown in the eigenvalue counting function is due to the geometry of Laakso spaces rather than dimensionality of the Laplacian.

\section{A Higher Dimensional Case}\label{sec:CasimirLR2}

\subsection{The $3+$ dimensional model}
 Let $L_j$ be the Laakso space represented by the sequence $j_n=j~\forall n$, let K be the Cantor set, and let $I=[0,1]$. We modify the configuration in \cite{KKPSS} by considering the space
 
 \begin{equation*}
 L_j \times \mathbb{R}^2=[( I \times K)/\sim] \times \mathbb{R}^2.
 \end{equation*}
Attach two conducting plates $P_1, P_2 \subset L \times \mathbb{R}^2$ where 
  
  \begin{equation*}
  P_1 =[(0 \times K)/\sim] \times \mathbb{R}^2; P_2=[(1 \times K) / \sim] \times \mathbb{R}^2.
  \end{equation*}
  
  Now allow symmetric displacement of the plates in the I direction of the Laakso space. By construction, plates moving towards one another will compress the interior space between them while plates moving apart will stretch it. 

\subsection{Casimir Effect Strength versus Plate Separation Distance} 
The self-adjoint operator $\Delta_{L_j \times \mathbb{R}^2}$ takes the form 

\begin{equation*}
\Delta_{L_j \times \mathbb{R}^2}=\Delta_{L_j} -\frac{ \partial ^2}{\partial x_1^2} - \frac{\partial^2}{\partial x_2^2} 
\end{equation*}
where $\Delta_{L_j}$ is the non-negative definite self-adjoint Laplacian on $L_j$ such that $\Delta \Phi_n^* f=\Phi_n^* \Delta_n f$ for all $f \in Dom(\Delta_n)$.  The generalized spectrum for $\Delta_{L_j \times \mathbb{R}^2}$ is conveniently described by 

\begin{equation*}
\sigma_{L_j \times \mathbb{R}^2} = \{ \lambda_s + k_{x_1}^2 + k_{x_2}^2 : \lambda_s \in \sigma_{L_j}\}.
\end{equation*}
 
As mentioned in Section \ref{sec:CE}, the zero-point energy for our system arises from the ground state expectation of the Hamiltonian representing the quantized electric field, which takes the form 
\begin{equation}\label{eq:CasimirEnergy}
E_{Cas}=2 \cdot \frac{\hbar}{2} \sum_{\lambda_s , k_{x_1}, k_{x_2}} \omega_{\lambda_s, k_{x_1}, k_{x_2}}= \hbar c \sum_{ \lambda_s, k_{x_1}, k_{x_2}} \sqrt{ \lambda_s +k_{x_1}^2+ k_{x_2}^2}.
\end{equation}
The quantity $\hbar \omega_{\lambda_s, k_{x_1}, k_{x_2}}=\hbar c \sqrt{\lambda_s+k_{x_1}^2+k_{x_2}^2}$ is the electromagnetic energy associated to radiation with wave vector $(\sqrt{\lambda_s}, k_{x_1}, k_{x_2})$. 

We recast this last expression as an integral and switch to polar coordinates:
\begin{eqnarray*}
\frac{E_{Cas}}{Area} &=& \frac{\hbar c}{4 \pi^2}\sum_{\lambda_s \in \sigma_{L_j}} \int_{\mathbb{R}^2} dk_{x_1} dk_{x_2} \sqrt{ \lambda_s + k_{x_1}^2 + k_{x_2}^2} \\ &=& \frac{\hbar c}{4 \pi^2} \sum_{\lambda_s \in \sigma_{L_j}}  \int_0^\infty \int_0^{2\pi} dk \cdot k \sqrt{k^2+\lambda_s} ~d\theta~dk. 
\end{eqnarray*}
We regularize this last quantity to obtain
\begin{equation*}
\frac{E_{Cas}}{Area} = \frac{ \hbar c}{6 \pi} \sum_{ \lambda_s \in \sigma_{L_j}} (\lambda_s)^{3/2}.
\end{equation*}

\begin{prop}
Two conducting plates attached at the boundary of $L_j \times \mathbb{R}^2$ as described in the set up will experience a pressure given by 
\begin{eqnarray*}
	P_{Cas}(j)&=&\frac{ \hbar c \pi^2}{240} \left[ 1+ \frac{ 2j^4}{1-2j^3} + \frac{17}{8} \cdot \left( \frac{j^7}{1-2j^4} - \frac{j^6}{1-2j^3} \right) +  \right. \\
	&& \hspace{1cm} \left.  \left( \frac{j^4}{1-2j^4}- \frac{2j^3}{1-2j^4} \right) \right] \\
	&=& \frac{\hbar c \pi^2}{240} \left[ \frac{ 8 -16j^3 -8j^4 + 15 j^6 + j^7 }{8 - 16j^3 -16 j^4 + 32 j^7} \right],
\end{eqnarray*}
where a positive signed pressure indicates a repulsive force. 
\end{prop}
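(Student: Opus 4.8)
The plan is to collapse the pressure onto a single special value of the Laakso spectral zeta function and then compute that value directly from the spectrum recorded in Section~\ref{sec:laakso}. Taking as given the regularized energy density derived just above the statement,
\[
\frac{E_{Cas}}{Area}=\frac{\hbar c}{6\pi}\sum_{\lambda_s\in\sigma_{L_j}}(\lambda_s)^{3/2}=\frac{\hbar c}{6\pi}\,\zeta_{L_j}(-\tfrac32),
\]
I would first restore the dependence on the plate separation $d$. Since the plates bound the whole $I$-direction and the two $\mathbb{R}^2$ momenta have already been integrated out, a symmetric displacement rescales only the Laakso factor, so every $\lambda_s$ becomes $\lambda_s/d^{2}$ and the density becomes $\frac{\hbar c}{6\pi}d^{-3}\zeta_{L_j}(-\tfrac32)$. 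Differentiating, the pressure is proportional to $d^{-4}\zeta_{L_j}(-\tfrac32)$, which already accounts for the advertised inverse-fourth-power law. Fixing the normalization against the classical plate result $\pi^2 c\hbar/240\,d^4$ by matching the bare interval contribution, for which $\zeta(-\tfrac32)=\pi^3\zeta_R(-3)=\pi^3/120$, identifies the bracket as exactly $\tfrac{120}{\pi^3}\zeta_{L_j}(-\tfrac32)$ and explains why the leading term is $1$.

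The core of the argument is the evaluation of $\zeta_{L_j}(-\tfrac32)=\sum_i g_i\lambda_i^{3/2}$ over the five eigenvalue families of the spectrum recorded in Section~\ref{sec:laakso}, specialized to $d_n=j^n$. I would write each family as an iterated sum over $n$ and $k$ and regularize exactly as in Section~\ref{sec:finite}: the inner sum of $k^3$ over the integer modes continues to $\zeta_R(-3)=\tfrac1{120}$, the inner sum of $(k+\tfrac12)^3$ over the half-integer boundary modes continues to the Hurwitz value $\zeta(-3,\tfrac12)=-B_4(\tfrac12)/4$, and each outer sum over $n$ is a geometric series summed by analytic continuation past its radius of convergence. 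The key structural observation is that a family with multiplicity proportional to $2^n$ and eigenvalue proportional to $j^{2n}$ yields the ratio $2j^3$ and the denominator $1-2j^3$, whereas an extra factor $d_{n-1}=j^{n-1}$ in the multiplicity promotes this to $2j^4$ and the denominator $1-2j^4$. Carrying out the five summations then produces the constant interval term, one $1-2j^3$ term, one $1-2j^4$ term, and a pair of terms from the fourth and fifth families whose multiplicities differ by the factor $2^{n-1}$ versus $2^{n-2}$ and whose eigenvalues differ by a factor $4$; these combine into the single block $\tfrac{17}{8}\bigl(\tfrac{j^7}{1-2j^4}-\tfrac{j^6}{1-2j^3}\bigr)$, the coefficient $\tfrac{17}{8}=2+\tfrac18$ being the tell-tale sum of the two contributions.

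Finally I would place the four pieces over the common denominator $(1-2j^3)(1-2j^4)$, clear the overall factor, and simplify to the closed form displayed in the statement; the sign of the resulting numerator, which is dominated by $+j^7$ for large $j$, then gives the claimed repulsive, positive pressure. I expect the main obstacle to be the simultaneous bookkeeping rather than any deep step: one must track all five families, the two distinct regularized $k$-values (in particular the half-integer family, where the Hurwitz value rather than $\zeta_R(-3)$ must be used, an easy place to slip), the differing lower index ranges ($n\ge1$ for the interior families versus $n\ge2$ for the fourth and fifth), and the multiplicities together, because the surviving cross-terms are precisely those that each individual geometric series would wrongly suggest are removable, as already noted in Section~\ref{sec:finite}. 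A strong internal check on the algebra is that the spurious high-degree monomials $j^8$ and $j^{10}$ must cancel in the combined numerator, leaving a polynomial of degree $7$ over $8(1-2j^3)(1-2j^4)$; verifying this cancellation, together with the degree count, is how I would guard against sign and coefficient errors.
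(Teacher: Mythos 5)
Your regularization framework agrees with the paper's outline: the $d^{-4}$ law from rescaling $\lambda_s\mapsto\lambda_s/d^2$, inner $k$-sums evaluated as $\zeta_R(-3)=1/120$, regularized geometric series in $n$, the loop-family block $\frac{j^4-2j^3}{1-2j^4}$, and the combination $2+\frac18=\frac{17}{8}$ of the fourth and fifth families are all correct. The genuine gap is at the very first step of your core computation: you evaluate the spectrum recorded in Section \ref{sec:laakso}, which is the spectrum of $\Delta$ with \emph{Neumann} conditions at the degree-one vertices forming the boundary of $F_n$. But the conducting plates at $(0\times K)/\sim$ and $(1\times K)/\sim$ impose \emph{Dirichlet} conditions there (Section \ref{sec:CE}), and the paper's proof consists precisely of redoing the eigenvalue counting under this change: of the loops, V's, and crosses into which each $F_n$ decomposes, only the V's --- the shapes touching the boundary --- are affected. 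Their antisymmetric localized eigenfunctions change from half-integer modes $(k+1/2)^2\pi^2 d_n^2$ (Neumann at the free ends) to integer modes $k^2\pi^2 d_n^2$ (Dirichlet), still with multiplicity $2^n$, and the pulled-back interval family starts at $k=1$. Consequently no Hurwitz value appears anywhere in the correct calculation; the step you flagged as ``an easy place to slip'' is exactly where your proposal slips.

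The error is not cosmetic. In units of $\pi^3\zeta_R(-3)$, the Dirichlet V-family contributes $+\frac{2j^3}{1-2j^3}$ to the bracket, whereas your Neumann/Hurwitz prescription gives $\frac{\zeta(-3,1/2)}{\zeta_R(-3)}\cdot\frac{2j^3}{1-2j^3}=-\frac{7}{8}\cdot\frac{2j^3}{1-2j^3}$. Carrying your version through the same algebra produces
\begin{equation*}
\frac{8-46j^3-8j^4+15j^6+61j^7}{8-16j^3-16j^4+32j^7},
\end{equation*}
whose $j\to\infty$ limit is $61/32$ rather than the $1/32$ the paper obtains, so your route cannot reproduce the stated result. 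One further bookkeeping remark that affects your planned consistency check: the term $\frac{2j^4}{1-2j^3}$ in the proposition's first display is a typo for $\frac{2j^3}{1-2j^3}$. With $2j^4$ the combined numerator retains an uncancelled $-32j^8$ (so your ``$j^8$ must cancel'' test would fail against the display as printed), whereas with $2j^3$ --- exactly the Dirichlet V contribution, and exactly what your own structural rule (multiplicity $2^n$, eigenvalue scale $j^{2n}$, hence ratio $2j^3$) predicts --- the expression simplifies to the stated closed form $\frac{8-16j^3-8j^4+15j^6+j^7}{8-16j^3-16j^4+32j^7}$.
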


\begin{proof}
Computing the Casimir pressure involves eigenvalue counting arguments similar to the ones made in \cite{RS2009} except that Neumann boundary conditions are replaced by Dirichlet boundary conditions at the boundary $(0 \times K)/\sim$ and $(1 \times K )/ \sim$. We construct the full spectrum of $\Delta_{L_j}$ by exploiting orthogonality relations between eigenfunctions in different quantum graph approximations.  Decomposing each quantum graph $F_n$ into loops, V's, and crosses, note that the V's are the only shapes whose localized eigenfunctions are altered by the new boundary conditions. Once the new spectrum with multiplicities is found, substitute into the expression for Casimir energy, regularize the sum, and take derivatives with respect to displacement to obtain the result.  
\end{proof}

In the limit $j \rightarrow \infty$ the Hausdorff dimension of the Laakso spaces $L_j$ go to $1$. So in the $j \rightarrow \infty$ limit we might expect something similar to the classical case since the localized eigenfunctions have eigenvalues increasing without bound so in the limit effectively vanish. Simply from the formula we have that $\lim_{j \rightarrow \infty} P_{Cas}(j) = \frac{1}{32} \frac{\hbar c \pi^2}{240}$. Hence, in the limit, our fractal pressure is precisely $1/32$ the magnitude of the classical Casimir value and acting in the opposite direction. 

\begin{cor}
Two conducting plates attached at the boundary of $L_j \times \mathbb{R}^2$ and then stretched to a distance $d$ from one another will experience a Casimir pressure
\begin{equation*}
P_{Cas}(j,d) = \frac{P_{Cas}(j)}{d^4}.
\end{equation*}
\end{cor}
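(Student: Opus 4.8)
The plan is to prove the corollary by a pure scaling argument, reducing the $d$-dependence to the homogeneity of the regularized vacuum energy under dilation of the Laakso factor.

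First I would make precise what ``stretching to distance $d$'' does to the operator. Since the two plates $P_1$ and $P_2$ sit at the opposite ends $(0\times K)/\sim$ and $(1\times K)/\sim$ of the interval direction, pulling them apart to a separation $d$ simply dilates the entire interval factor $I=[0,1]$ of $L_j$ by the factor $d$, leaving the transverse $\mathbb{R}^2$ directions untouched. Concretely, in each graph approximation $F_n$ every edge has its length $d_n^{-1}$ replaced by $d\,d_n^{-1}$. Because $\Delta_{L_j}$ acts as $-\partial^2/\partial x_e^2$ on each edge, the elementary fact that $u(x)\mapsto u(x/d)$ carries an eigenfunction of eigenvalue $\lambda$ on a segment of length $\ell$ to one of eigenvalue $\lambda/d^2$ on a segment of length $d\ell$ shows that every eigenvalue scales as $\lambda_s\mapsto \lambda_s/d^2$, while the multiplicities---being purely combinatorial counts of loops, V's, and crosses---are unchanged. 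Hence the stretched spectrum is $\sigma_{L_j(d)}=\{\lambda_s/d^2:\lambda_s\in\sigma_{L_j}\}$ with the same multiplicities.

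Next I would push this through the spectral zeta function and the regularized energy. The uniform scaling factors out of the Dirichlet series, giving $\zeta_{L_j(d)}(s)=d^{2s}\zeta_{L_j}(s)$ on the half-plane of absolute convergence; since both sides are meromorphic this identity persists under analytic continuation to all of $\mathbb{C}$, in particular at $s=-3/2$. Substituting into the regularized energy density derived above,
\[
\frac{E_{Cas}(d)}{Area}=\frac{\hbar c}{6\pi}\,\zeta_{L_j(d)}\!\left(-\tfrac32\right)=d^{-3}\,\frac{E_{Cas}(1)}{Area},
\]
so the energy per unit area is homogeneous of degree $-3$ in $d$.

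Finally, the Casimir pressure is obtained by differentiating the energy density with respect to the separation $d$ (with the fixed sign convention of the preceding proposition), $P_{Cas}(j,d)=-\partial_d\big(E_{Cas}(d)/Area\big)$, which sends a degree $-3$ homogeneous function to a degree $-4$ homogeneous one. Therefore $P_{Cas}(j,d)=d^{-4}P_{Cas}(j,1)$, and since $P_{Cas}(j)=P_{Cas}(j,1)$ is exactly the pressure of the preceding proposition (the unstretched configuration, $d=1$), we obtain $P_{Cas}(j,d)=P_{Cas}(j)/d^4$. I expect the only real subtlety to be the justification that the dilation commutes with the zeta regularization---i.e.\ that replacing each $\lambda_s$ by $\lambda_s/d^2$ inside the divergent sum corresponds exactly to multiplying the continued value $\zeta_{L_j}(-3/2)$ by $d^{-3}$---together with the bookkeeping needed to confirm that the normalization at $d=1$ reproduces $P_{Cas}(j)$ on the nose rather than merely up to proportionality; the spectral input and the value $P_{Cas}(j)$ itself are already supplied by the previous proposition.
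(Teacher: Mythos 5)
Your proposal is correct and follows essentially the same route as the paper, whose entire proof is the one-line observation that stretching to separation $d$ sends $\lambda_s \mapsto \lambda_s/d^2$ for every $\lambda_s \in \sigma_{L_j}$; you have simply expanded that remark into its full justification (edge-length scaling, invariance of multiplicities, compatibility with the zeta regularization, and the homogeneity degree count from $-3$ to $-4$ under differentiation in $d$).
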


\begin{proof}
This follows immediately from the fact that multiplying the displacement by a factor of $d$ means $\lambda_s \rightarrow \frac{ \lambda_s}{d^2}$ for every $\lambda_s \in \sigma_L$. 
\end{proof}

In particular, the power law governing the Casimir pressure as a function of displacement is independent of the spectral dimension of the Laakso space $L_j$.

\subsection{Casimir Pressure as a function of $\{j_i\}$}
\begin{prop}\label{prop:CasimirPressure2}
Let L is the Laakso space represented by some N-periodic sequence $\{j_i\}$.  Then two conducting plates attached at the boundary of $L \times \mathbb{R}^2$ yield an unnormalized Casimir energy given by 

\begin{eqnarray*}
	E_{Cas} (\{j_i\}, d )   &=& \sum_{k=1}^\infty \left( \frac{k \pi}{d} \right) ^3  + \sum_{n=1}^\infty \sum_{k=1}^\infty \left( \frac{ k \pi I_n}{d} \right) ^3\\
		&& + \sum_{n=1}^\infty \sum_{k=1}^\infty 2^{n-1} I_{n-1} ( j_n-2) \left( \frac{k \pi I_n}{d} \right) ^3 \\
		&& + \frac{17}{16} \sum_{n=2}^\infty \sum_{k=1}^\infty 2^{n-1}\left( I_{n-1} -1 \right) \left( \frac{ k \pi I_n}{d} \right) ^3. 
\end{eqnarray*}
\end{prop}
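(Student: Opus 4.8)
The plan is to reduce the proposition to substituting the spectrum of $\Delta_{L}$ under the plate-induced Dirichlet conditions into the regularized energy functional derived earlier in this section. That computation gives $E_{Cas}/\mathrm{Area} \propto \sum_{\lambda_s \in \sigma_{L}} \lambda_s^{3/2}$, so writing each eigenvalue as $\lambda_s = \omega_s^2$ with multiplicity $g_s$, the unnormalized energy is $\sum_s g_s \omega_s^3$; placing the plates at separation $d$ rescales the interval direction so that $\omega_s \mapsto \omega_s/d$. The entire content is therefore (a) producing $\sigma_{L}$ with multiplicities for the Dirichlet problem on a periodic sequence, and (b) sorting $\sum_s g_s(\omega_s/d)^3$ into the four displayed families.

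For (a) I would start from the Neumann spectrum and multiplicities recorded in Section~\ref{sec:laakso} and alter only the data that the plates at $(0\times K)/\sim$ and $(1\times K)/\sim$ touch. Decomposing each approximation $F_n$ into loops, V's, and crosses as in \cite{RS2009}, the eigenfunctions localized on loops and crosses are untouched by a change of boundary condition, so the only families that move are the boundary ones: the $2^n$ boundary V's, whose Neumann half-integer ladder $\bigl((k+\tfrac12)\pi d_n\bigr)^2$ becomes the Dirichlet integer ladder $(k\pi d_n)^2$, and the one globally propagating mode, whose Neumann ladder $(k\pi)^2$, $k\ge 0$, becomes $(k\pi)^2$, $k\ge 1$ (the lost $k=0$ mode being invisible to the energy). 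For a general periodic sequence these counts are expressed through $I_n = d_n = \prod_{i=1}^n j_i$ and the doubling factor $2^n$ of the $F_{n-1}\to F_n$ fibering, which I would track period-by-period as in the proof of Proposition~\ref{prop:finiteperiodic}.

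Step (b) is then a regrouping. The propagating ladder gives the first sum $\sum_k (k\pi/d)^3$; the $2^n$ Dirichlet boundary V's at each level give the second sum over $(k\pi I_n/d)^3$; and the loops created at level $n$ reproduce the third sum with multiplicity $2^{n-1}I_{n-1}(j_n-2)$, unchanged from the Neumann case since they are interior. The only coefficient needing thought is the $\tfrac{17}{16}$ of the fourth sum: the families with eigenvalues $(k\pi d_n)^2$ and $(k\pi d_n/2)^2$ carry the same profile $\propto (I_{n-1}-1)$, with multiplicities $2^{n-1}(I_{n-1}-1)$ and $2^{n-2}(I_{n-1}-1)$, and since $\lambda^{3/2}$ attaches a factor $(1/2)^3 = 1/8$ to the second, combining them into one sum over $(k\pi I_n/d)^3$ yields $2^{n-1}+2^{n-2}/8 = \tfrac{17}{16}2^{n-1}$.

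The main obstacle is step (a): fixing the Dirichlet multiplicities exactly for an arbitrary periodic sequence, not the closing algebra. The delicate points are confirming that the half-integer boundary ladder is the only part of the spectrum the plates disturb, and correctly propagating the $2^n$ and $I_n$ factors through the projective system when $\{j_i\}$ has period greater than one, where one sums over a single period before summing over periods. I would settle these counts first on $F_1$ and $F_2$ by hand and then verify the full assembly by specializing to $j_i = j$: regularizing each sum with $\zeta_R(-3)=\tfrac{1}{120}$ and summing the resulting geometric series should reproduce the closed form for $P_{Cas}(j)$ established earlier in this section, a stringent check on every coefficient.
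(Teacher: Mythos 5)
Your proposal is correct and takes essentially the same route as the paper, whose entire proof is a one-sentence appeal to Equation \ref{eq:CasimirEnergy} and eigenvalue counting in the style of \cite{RS2009}: the loops/V's/crosses decomposition, the observation that only the boundary V's (half-integer ladder becoming the integer ladder $k^2\pi^2 I_n^2$) and the lost global zero mode are affected by the Dirichlet conditions, and the regrouping $2^{n-1} + 2^{n-2}/8 = \tfrac{17}{16}\,2^{n-1}$ for the two cross families are exactly the details the paper leaves implicit. One remark: your counting gives the V-family the multiplicity $2^n$, which is missing from the proposition as printed but is precisely what the paper itself uses when it regularizes $\sum_{n,k} 2^n (k\pi I_n/d)^3$ in the proof of the subsequent pressure formula, so your derivation agrees with the paper's actual computation and in fact exposes an apparent typographical omission in the stated formula.
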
 

\begin{proof}
This more general case follows from Equation \ref{eq:CasimirEnergy} and eigenvalue counting arguments similar to the ones made in \cite{RS2009} for periodic Laakso spaces. 
\end{proof}

\begin{prop}
Let L be the Laakso space represented by some N-periodic sequence $\{j_i\}$. Then two conducting plates attached at the boundary of $L \times \mathbb{R}^2$ will experience a Casimir pressure given by

\begin{eqnarray*}
P_{Cas} (\{j_i\}, d) &=&\frac{\hbar c \pi^2}{240 d^4} \left[ 1+ \frac{15}{32} \left( \sum_{i=1}^N  \prod_{k \leq i} 2 j^3_k  \right) \left( \frac{1}{1-r^{3N}2^N} \right)\right. \\
	&&\left. + \frac{1}{2} \left( \sum_{i=1}^N 2j_k^4 \right)\left( \frac{1}{1-r^{4N}2^N} \right)  \right. \\ 
	&& - \left. \frac{15}{32} \sum_{i=1}^N \left( \prod_{k \leq i} 2j_k^4 \right) \left( \frac{j_i^{N-i} }{j_i^N -r^{4N}2^N} \right) \right]
\end{eqnarray*}
where a positive sign indicates a repulsive force. 
\end{prop}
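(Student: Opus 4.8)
The plan is to take the unnormalized energy of Proposition~\ref{prop:CasimirPressure2} as the starting point and turn it into a pressure by two regularizations followed by a differentiation in the separation $d$. Nothing further about the spectrum is needed, since the multiplicities and the combination coefficient $\tfrac{17}{16}$ have already been folded into that energy formula.

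First I would carry out the inner sum over $k$ by zeta regularization. Each of the four pieces of $E_{Cas}(\{j_i\},d)$ has the shape $(\text{weight})\cdot(\pi I_n/d)^3\sum_{k\ge1}k^3$, and the divergent sum $\sum_{k\ge1}k^3$ is assigned the value $\zeta_R(-3)=\tfrac{1}{120}$, consistent with the regularized energy $\tfrac{\hbar c}{6\pi}\sum_{\lambda_s}(\lambda_s)^{3/2}$ recorded in Section~\ref{sec:CasimirLR2}. Factoring out the common $\pi^3/(120\,d^3)$ leaves $E_{Cas}=\tfrac{\pi^3}{120\,d^3}B$, where
\begin{equation*}
B=1+\sum_{n\ge1}I_n^3+\sum_{n\ge1}2^{n-1}I_{n-1}(j_n-2)I_n^3+\tfrac{17}{16}\sum_{n\ge2}2^{n-1}(I_{n-1}-1)I_n^3
\end{equation*}
is independent of $d$. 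Since the energy per area is $u=\tfrac{\hbar c}{6\pi}E_{Cas}\propto d^{-3}$, the pressure $P_{Cas}=-\partial_d u$ immediately produces the $d^{-4}$ law, and the numerical constants collapse as $\tfrac{\hbar c}{6\pi}\cdot\tfrac{3\pi^3}{120}=\tfrac{\hbar c\pi^2}{240}$, giving $P_{Cas}=\tfrac{\hbar c\pi^2}{240\,d^4}B$. This already fixes the prefactor; the entire content of the proposition is the closed-form evaluation of $B$.

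Next I would evaluate $B$ using the $N$-periodicity of $\{j_i\}$. Writing $I_n=d_n=\prod_{k\le n}j_k$ and using $j_{n+N}=j_n$ gives $d_{n+N}=r^N d_n$, so the monomials $2^n d_n^3$ and $2^n d_n^4$ are geometric in the block index: setting $n=mN+i$ with $1\le i\le N$ yields $2^n d_n^3=(2^N r^{3N})^m\prod_{k\le i}2j_k^3$ and $2^n d_n^4=(2^N r^{4N})^m\prod_{k\le i}2j_k^4$. Summing these divergent series by the closed form $\sum_{m\ge0}x^m=(1-x)^{-1}$---justified exactly as for the finite approximations in Section~\ref{sec:finite}, by analytic continuation from the region of convergence---collapses each into a finite partial sum over one period divided by $1-2^N r^{3N}$ or $1-2^N r^{4N}$. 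I would then expand the summands of $B$ into these model monomials via $d_n=d_{n-1}j_n$, rewriting $2^{n-1}I_{n-1}(j_n-2)I_n^3=2^{n-1}\bigl(d_n^4-2\,d_{n-1}^4 j_n^3\bigr)$ and $\tfrac{17}{16}2^{n-1}(I_{n-1}-1)I_n^3=\tfrac{17}{16}2^{n-1}\bigl(d_{n-1}^4 j_n^3-d_n^3\bigr)$. This sorts $B$ into the model series $\sum_n 2^n d_n^3$, $\sum_n 2^n d_n^4$ and the mixed series $\sum_n 2^n d_{n-1}^4 j_n^3$. Collecting coefficients recombines the bare $\tfrac{17}{16}$ weight with the cross terms: the mixed series, for instance, acquires coefficient $-1+\tfrac{17}{32}=-\tfrac{15}{32}$, matching the $\tfrac{15}{32}$ of the third bracketed term, while the $d_n^4$ series keeps coefficient $\tfrac12$, and the cube sector fixes the first bracketed term.

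The main obstacle is the mixed series $\sum_n 2^n d_{n-1}^4 j_n^3$: the shifted factor $d_{n-1}$ paired with $j_n^3$ does not align cleanly with the period, so its geometric ratio must be organized residue class by residue class, which is precisely what produces the per-index denominators $j_i^N-2^N r^{4N}$ together with the numerator weights $j_i^{N-i}$ rather than a single global ratio. Carrying this index shift through correctly---tracking $d_0=1$ and the summation ranges, and verifying that the several sectors recombine to the stated coefficients---is where the bookkeeping is delicate, and it is also where any stray factor in the second term of Proposition~\ref{prop:CasimirPressure2} would surface. A natural safeguard against sign and coefficient errors is to specialize to $N=1$ and check that one recovers the constant-$j$ pressure obtained earlier in this section; everything after the regrouping is then routine manipulation of finite sums over a single period, followed by placing the three terms over the common denominator to reach the consolidated rational form.
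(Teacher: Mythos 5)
Your overall strategy is exactly the paper's: the published proof likewise zeta-regularizes each double sum of Proposition~\ref{prop:CasimirPressure2} via $\zeta_R(-3)=\tfrac{1}{120}$, collapses the $n$-sums by $N$-periodicity ($I_{n+N}=r^N I_n$, $j_{n+N}=j_n$) into geometric series, and takes a negative $d$-derivative, carrying this out explicitly only for the model sum $\sum_n\sum_k 2^n(k\pi I_n/d)^3$ and leaving the remaining terms to the reader. Your prefactor bookkeeping $\tfrac{\hbar c}{6\pi}\cdot\tfrac{3\pi^3}{120}=\tfrac{\hbar c\pi^2}{240}$ and your monomial decomposition with coefficients $\tfrac12$ (quartic sector), $-1+\tfrac{17}{32}=-\tfrac{15}{32}$ (mixed sector), and $1-\tfrac{17}{32}=\tfrac{15}{32}$ (cubic sector) are correct, the last of these requiring the multiplicity $2^n$ in the second term of Proposition~\ref{prop:CasimirPressure2} (as in the paper's own model computation), whereas the $B$ you wrote down follows the literal, apparently typo-ridden, statement and omits it.

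The genuine gap is the claim you rely on at the crucial step: that the mixed series $\sum_n 2^n I_{n-1}^4 j_n^3$, summed ``residue class by residue class,'' produces the per-index denominators $j_i^N-r^{4N}2^N$ and weights $j_i^{N-i}$. It does not. Writing $n=mN+i$, periodicity gives $2^{n+N}I_{n+N-1}^4 j_{n+N}^3 = 2^N r^{4N}\cdot 2^n I_{n-1}^4 j_n^3$, so the per-period ratio is the same $2^N r^{4N}$ for \emph{every} residue class, and the regularized mixed series is $\frac{1}{1-r^{4N}2^N}\sum_{i=1}^N \frac{1}{j_i}\prod_{k\le i}2j_k^4$: one global denominator with weights $1/j_i$, not the stated per-index denominators. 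A per-index ratio $r^{4N}2^N/j_i^N$ could only arise from summands behaving like $2^n I_n^4 j_n^{-n}$, which the Casimir energy does not contain. Consequently, your plan executed correctly yields a third bracketed term different from the one in the statement, and your own proposed safeguard exposes rather than confirms the bridge: at $N=1$ the statement's first and third bracketed terms cancel identically, leaving $1+\frac{j^4}{1-2j^4}\to\frac12$ as $j\to\infty$, while the constant-$j$ proposition (and the global-denominator formula just described) gives the limit $\frac{1}{32}$. So the formula as printed cannot be reached by this computation from Proposition~\ref{prop:CasimirPressure2}; a complete write-up must either derive and state the corrected third term or explicitly flag the discrepancy, rather than assert that the residue-class bookkeeping produces it.
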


\begin{proof}
Note that one formally has
\begin{equation*}
\sum_{n=1}^\infty I_n = \left( \sum_{l =0}^\infty r^{Nl} \right) \left( \sum_{i=1}^N \prod_{k \leq i} j_k \right) .
\end{equation*}
Hence, we can regularize 
\begin{eqnarray*}
\sum_{n=1}^\infty \sum_{k=1}^\infty 2^n \left( \frac{ k \pi I_n }{d} \right) ^3 &= & \frac{\zeta(-3) \pi ^3 }{d^3} \sum_{n=1} 2^n I^3_n \\&=& \frac{\zeta(-3) \pi ^3 }{d^3}  \sum_{l=0}^\infty r^{3Nl} \left( \sum_{i=1}^N 2^{ Nl+i}\prod_{k \leq i} j_k^3 \right) \\&=& \frac{\zeta(-3) \pi ^3 }{d^3} \left( \sum_{l=0}^\infty r^{3Nl} 2^{Nl} \right) \left( \sum_{i=1} ^N \prod_{k \leq i} 2j_k^3 \right) \\&=& \frac{\pi^3}{120 d^3} \left( \frac{1}{1-r^{3N} 2^N} \right) \left( \sum_{i=1}^N \prod_{k \leq i} 2j_k^3 \right). 
\end{eqnarray*}
The unregularized Casimir energy for $L \times \mathbb{R}^2$ is comprised of terms that look quite similar to above sum.  The reader may obtain the formula for Casimir pressure by first regularizing each term of the Casimir energy in  Proposition \ref{prop:CasimirPressure2}  and then taking a (negative) spatial derivative with respect to d. 
\end{proof}

\small
\bibliography{laaksocasimir}{}
\bibliographystyle{plain}

\end{document}